\def\url#1{{\texttt #1}}
\newtheoremstyle{theorem}{1em}{1em}{\slshape}{0pt}{\bfseries}{.}{ }{}
\theoremstyle{theorem}
\newtheorem{theorem}{Theorem}
\newtheorem*{theorem*}{Theorem}
\newtheorem{lemma}[theorem]{Lemma}
\providecommand{\setN}{\mathbb{N}}
\providecommand{\setR}{\mathbb{R}}
        \def\drawRect#1#2#3#4#5{
           \FPeval{\x2}{(#2) + #4} 
           \FPeval{\y2}{(#3) + #5} 
           \pspolygon[#1](#2,#3)(\x2,#3)(\x2,\y2)(#2,\y2)
        }
\DeclareMathAlphabet{\pazocal}{OMS}{zplm}{m}{n}
\title{A $(1+\varepsilon)$-Approximation for Makespan Scheduling with Precedence Constraints using LP Hierarchies\footnote{The conference version of this work appeared in the 48th ACM Symposium on Theory of Computing (STOC 2016).}}
\date{} 
\author{Elaine Levey\thanks{University of Washington, Seattle, USA. Email: elevey@cs.washington.edu} \and Thomas Rothvo{ss}\thanks{University of Washington, Seattle, USA. Email: rothvoss@uw.edu. Supported by NSF grant 1420180 with title ``\emph{Limitations of convex relaxations in combinatorial optimization}'' and an Alfred P. Sloan Research Fellowship.}}
\begin{document}

\maketitle

\begin{abstract}
\noindent 
In a classical problem in scheduling, 
one has $n$ unit size jobs with a \emph{precedence order} and the goal is to find 
a schedule of those jobs on  $m$ identical machines as to minimize the makespan. It is one of the remaining four open problems from the book of Garey \& Johnson whether or not this problem is $\mathbf{NP}$-hard for $m=3$. 
 
We prove that for any fixed $\varepsilon$ and $m$, an LP-hierarchy lift of the time-indexed LP with a slightly super poly-logarithmic number of $r = (\log(n))^{\Theta(\log \log n)}$ rounds provides a $(1 + \varepsilon)$-approximation. 
For example Sherali-Adams suffices as hierarchy. This implies an algorithm that yields a $(1+\varepsilon)$-approximation in time $n^{O(r)}$. 
The previously best approximation algorithms guarantee
a $2 - \frac{7}{3m+1}$-approximation in polynomial time for $m \geq 4$ and $\frac{4}{3}$ for $m=3$. 
Our algorithm is based on a recursive scheduling approach where in each step we reduce the correlation in form of long chains. Our method adds to the rather short list of examples where hierarchies are actually useful to obtain better approximation algorithms. 
\end{abstract}


\section{Introduction}

One of the landmarks in the theory of scheduling is the paper of Graham~\cite{Graham66} from 1966,
dealing with the following problem: suppose we have a set $J$ of $n$ jobs, each one
with a running time $p_j$ along with $m$ identical parallel machines that we can use to
process the jobs. Moreover, the input contains a \emph{precedence order} 
on the jobs; we write $j \prec j'$ if job $j$ has to be completed before job $j'$
can be started. The goal is to schedule the jobs in a non-preemptive fashion so that the
\emph{makespan} is minimized. Here, the makespan gives the time that the last job is finished.
In the \emph{3-field notation}\footnote{In the 3-field notation, the first field specifies the available processors, the 2nd field the jobs and the last field the objective function. In our case, $Pm$ means that we 
have $m$ identical machines; $p_j=1, \textrm{prec}$ indicates that the jobs have unit length and precedence constraints and the last field $C_{\max}$ specifies that the objective function is to minimize the maximum completion time.}, this problem is abbreviated as $P \mid \textrm{prec} \mid C_{\max}$. Graham showed that the following \emph{list schedule} gives a $(2-\frac{1}{m})$-approximation on the makespan: compute an arbitrary topological ordering of the jobs and whenever a machine becomes idle, select the first available job from the list. It had been known since the late 70's that it is $\mathbf{NP}$-hard to approximate the problem better than within a factor of $4/3$ due to Lenstra and Rinnooy Kan~\cite{ComplexityOfScheduling-Lenstra-RinnoyKan1978} and Schuurman and Woeginger~\cite{TenOpenProblems-SchuurmanWoeginger1991} prominently placed the quest for any improvement on their well known list of 10 open problems in scheduling. Finally in 2010, Svensson~\cite{HardnessPrecedenceScheduling-Svensson-STOC2010} showed that assuming a variant of the unique games conjecture~\cite{OptLongCodeTest-BansalKhot-FOCS09}, there is no $(2-\varepsilon)$-approximation algorithm for $P \mid \textrm{prec}, p_j=1 \mid C_{\max}$. 
However, for unit size jobs, Lam and Sethi~\cite{TwoSchedulingAlgorithms-Lam-Sethi-SICOMP77} analyzed an
algorithm of Coffman and Graham and showed that it provides a slighly better guarantee of $2 - \frac{2}{m}$  for $P \mid \textrm{prec},p_j=1 \mid C_{\max}$. Later, Gangal and Ranade~\cite{PrecedenceConstrainedScheduling-GangalRanade2008} gave an algorithm with a $2 - \frac{7}{3m+1}$ guarantee for $m \geq 4$.

In a typical scheduling application, the number of jobs might be huge compared to the number of machines, 
which does justify to ask for the complexity status of such problems if the number $m$ of machines
is a constant. Even under the additional restriction of unit size jobs, no better approximation result
is known. In fact, it is one of the remaining four open problems from the book of 
Garey and Johnson~\cite{GareyJohnson79}
whether $P3 \mid \textrm{prec},p_j = 1 \mid C_{\max}$ is even $\mathbf{NP}$-hard. Also
Schuurman and Woeginger~\cite{TenOpenProblems-SchuurmanWoeginger1991} list under ``Open Problem 1'' the question whether there is a PTAS for this
problem (recall that for $m=2$, the result of \cite{TwoSchedulingAlgorithms-Lam-Sethi-SICOMP77} gives an optimum schedule).

To understand where the lack of progress is coming from, one has to go back to the 
list scheduling algorithm of Graham. If we schedule the jobs in a greedy manner, then one can argue 
that there is always a \emph{chain} of jobs $j_1 \prec j_2 \prec \ldots \prec j_k$ so that at any point in time
either all $m$ machines are fully busy or a job from that chain was processed. Since both quantities, 
the \emph{load} $\frac{1}{m} \sum_{j \in J} p_j$ and the length of any chain are lower bounds on any
schedule, we can conclude that the schedule has length at most $2 \cdot OPT$. 
One can shave off a factor of $\frac{1}{m}$ even for general running times, by observing that the 
processing times of the 
jobs in the longest chain do not need to be again counted in the load bound. 
Also the papers \cite{TwoSchedulingAlgorithms-Lam-Sethi-SICOMP77} and \cite{PrecedenceConstrainedScheduling-GangalRanade2008} effectively rely on those two lower bounds. 
In fact, \cite{TechReportCharikar1995} showed that a large class of algorithms including the
ones of \cite{Graham66,PrecedenceConstrainedScheduling-GangalRanade2008} cannot beat 
a bound of $2 - \frac{2}{\sqrt{m}}$; moreover Graham's algorithm is indeed not better than
a $(2-\frac{2}{m})$-approximation for unit size jobs, see \cite{PrecedenceConstrainedScheduling-GangalRanade2008}.

Of course, one always has the option to study the strength of linear programs for an optimization problem.
The most natural one for $Pm \mid \textrm{prec},p_j=1 \mid C_{\max}$ is certainly the following
\emph{time-indexed LP}: 
For a parameter $T$ that denotes the length of the time horizon, 
we define a set $K(T)$ as the set of fractional solutions to:
\begin{eqnarray}
 \sum_{t=1}^T x_{j,t} &=& 1 \hspace{1.3cm} \forall j \in J    \label{eq:LP} \\
 \sum_{j \in J} x_{j,t} &\leq& m \hspace{1.2cm} \forall t \in [T] \nonumber \\ 
 \sum_{t' < t} x_{i,t'} &\geq& \sum_{t' \leq t} x_{j,t'} \quad \forall i \prec j \; \forall t \in [T] \nonumber \\
  0 \leq x_{j,t} &\leq& 1 \hspace{1.3cm} \forall j \in J \; \forall t \in [T] \nonumber
\end{eqnarray}
Here $x_{j,t}$ is a decision variable that is supposed to tell whether job $j \in J$ is scheduled in 
time slot $t \in [T]$, where $[T] := \{ 1,\ldots,T\}$. The constraints guarantee that in an integral solution each job is assigned
to one time slot; no time slot receives more than $m$ jobs and for a pair of jobs $i \prec j$, job $i$
has to be scheduled before $j$. 

Unsurprisingly, this LP has a constant integrality gap as one can see from the following construction: 
take $k$ blocks $J_1,\ldots,J_k$ of $|J_i| = m+1$ jobs each and define the precedence order so
that all the jobs in $J_i$ have to be finished before any job in $J_{i+1}$ can be started. 
Any integral schedule needs two time units per block, hence $OPT = 2k$. On the other hand, the LP
solution can schedule the $m+1$ jobs of each block ``in parallel'', each at a rate of $\frac{m}{m+1}$
and finish the schedule after $k \cdot \frac{m+1}{m}$ time units in which each machine has always
been fully busy. This results in an integrality gap of at least $2 - \frac{2}{m+1}$.

It has been long known, that in principle one can take the linear program for any optimization problem and 
strengthen it automatically by applying an LP or SDP hierarchy lift.
We will provide formal definitions later, but basically these operators ensure that for any
set of at most $r$ variables, the LP solution indeed lies in the convex hull of integral combinations.
Here, $r$ is the number of \emph{levels} or \emph{rounds} and one typically needs
time $n^{O(r)}$ to solve an $r$-level hierarchy.

Some known approximation results have been reinterpreted in hindsight in this framework, for 
example a constant number of Lasserre rounds applied to a basic LP suffices for the 
Goemans-Williamson algorithm for MaxCut~\cite{MaxCut-GoemansWilliamson-JACM95} and also a constant 
number of Lasserre rounds
implies the triangle inequalities in the $O(\sqrt{\log n})$-approximation algorithm by 
Arora, Rao and Vazirani~\cite{SparsestCut-AroraRaoVazirani-JACM09}. Moreover, the
subspace enumeration component in the subexponential time algorithm of Arora, Barak and
Steurer~\cite{UniqueGamesAlgo-AroraBarakSteurer-FOCS10} for Unique Games could be replaced with a Lasserre SDP.
However, there are relatively few results where hierarchies
have been genuinely useful (at least fewer than researchers have hoped for). 
For example Chlamt{\'a}{\v c}~\cite{ApproxAlgoViaSDP-Chlamtac-FOCS07}
used SDP hierarchies to find better colorings in 3-colorable graphs and Raghavendra and 
Tan~\cite{CSPs-with-card-constraints-RaghavendraTanSODA12} apply them to obtain approximation
algorithms for CSPs with cardinality constraints. An application to color
hypergraphs can be found in~\cite{HypergraphColoringChlamtacSinghAPPROX08}.
Hierarchies also turned out to be the
right approach for Sparsest Cut in bounded tree width graphs, see the paper by 
Chlamt{\'a}{\v c}, Krauthgamer and Raghavendra~\cite{SparsestCutInBoundedTreeWidthGraphs-CKR-APPROX2010} 
and the 2-approximation by Gupta, Talwar and Witmer~\cite{SparsestCut-GuptaTalwarWitmer-STOC2013}.
For an application of the Lasserre hierarchy in the context of scheduling, see the recent work 
of Bansal, Srinivasan and Svensson~\cite{LiftAndRound-BansalSrinivasanSvensson-STOC2016}.
Throughout this paper, logarithms will be with respect to base 2, that means $\log(T) := \log_2(T)$.



\subsection{Our Contribution}

Our main result is that an LP lift with 
\[
(\log(n))^{O((m^2/\varepsilon^2) \cdot \log \log n)}
\]
rounds closes the integrality gap of LP~\eqref{eq:LP} to at most $1 + \varepsilon$. This implies:
\begin{theorem}
For the problem $Pm \mid \textrm{prec}, p_j = 1 \mid C_{\max}$ one can compute a $(1 + \varepsilon)$-approximate
solution in time $n^{O(r)}$ where $r := (\log(n))^{O((m^2/\varepsilon^2) \cdot \log \log n)}$. 
\end{theorem}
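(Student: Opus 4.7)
The plan is to establish that the $r$-round Lasserre lift of $K(T)$ has integrality gap at most $1+\varepsilon$; the theorem then follows because an $r$-round Lasserre program can be solved in time $n^{O(r)}$ and rounded to an integral schedule. The rounding procedure would be recursive, parameterized by the maximum precedence chain length $\ell$ in the current subproblem. The base case is constant $\ell$, i.e.\ a DAG of bounded depth, where the load bound alone matches $T$ within a $(1+\varepsilon)$ factor and a direct rounding suffices. For the inductive step, I would choose a small set $S$ of ``marker'' jobs spaced approximately every $\sqrt{\ell}$ positions along each long chain, use Lasserre conditioning to fix an integral assignment of $S$ to time slots (at the cost of one Lasserre round per conditioned variable), and argue that the conditioned solution decomposes along the time axis into independent subproblems of chain length at most $\sqrt{\ell}$, on which we recurse.

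The round accounting explains the superpolylogarithmic bound. Each recursion level pays $|S| = (\log n)^{O(m^2/\varepsilon^2)}$ rounds, where the $m^2/\varepsilon^2$ factor reflects the need to control up to $m$ concurrent chains simultaneously and to average over $(1/\varepsilon)^{O(1)}$ candidate cut positions per chain. Because the chain length contracts as $\ell \mapsto \sqrt{\ell}$, the recursion has depth $O(\log \log n)$, and the round budget multiplies across levels to give
\[
R(n) \leq \bigl((\log n)^{O(m^2/\varepsilon^2)}\bigr)^{\log \log n} = (\log n)^{O((m^2/\varepsilon^2)\log \log n)}.
\]
The loss in makespan at each recursion level is $1 + O(\varepsilon / \log \log n)$, and these losses compose geometrically to a total overhead of $1+\varepsilon$.

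The main obstacle, I expect, is proving that conditioning on $S$ truly decorrelates the subproblems along the time axis. Since the time-indexed LP couples time slots via the capacity constraint $\sum_j x_{jt} \leq m$ and via long-range precedence constraints, the decomposition is not automatic: one must show that the conditional Lasserre solution, restricted to the interval between two consecutive markers, is itself a feasible (albeit lower-round) Lasserre solution for a subproblem of chain length roughly $\sqrt{\ell}$. This presumably rests on a structural lemma asserting that correlations between temporally distant jobs in the Lasserre solution are mediated by chains passing through the intervening region, so that pinning a few well-placed markers severs these correlations. Selecting the markers themselves is also delicate and will likely use a probabilistic argument: among many candidate cut times, one must find one across which few chains carry substantial fractional weight, so that the conditioning step is simultaneously cheap (small $|S|$) and effective (tight reduction in chain length). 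Making all these pieces fit within the $O(\varepsilon / \log \log n)$ slack per level is the main technical burden.
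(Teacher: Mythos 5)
Your recursion is on chain length (contracting $\ell \mapsto \sqrt{\ell}$, depth $O(\log\log n)$) while the paper's recursion is on the time horizon itself: the interval $[T]$ is dissected into a balanced binary laminar family, each job is ``owned'' by the finest subinterval that fully contains its fractional support, and the recursive call contracts the interval length $T^* \mapsto T^*/2^{\ell^*}$ with $\ell^* \in \{k,\ldots,k^2\}$ and $k = \Theta((m/\varepsilon)\log\log T)$. Chain length control in the paper is achieved not by pinning markers along chains, but by conditioning on events of the form ``$x_{j,I_2} = 1$'' whenever a job $j$ has dependence degree above $\delta|I|$ within its owning interval $I$; this pushes high-degree jobs down to finer levels, and after at most $2mk^2 \cdot 2^{k^2}/\delta$ conditionings the jobs owned by the top $k^2$ levels have no chain longer than $k^2 \delta T^*$. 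Your round accounting (multiplicative across $O(\log\log n)$ levels) and the paper's (additive across $\log T$ levels, each paying $2^{O(k^2)} = 2^{O((m/\varepsilon)^2(\log\log T)^2)}$) happen to produce the same shape of bound, but for structurally different reasons.

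The genuine gap is the one you flag yourself but do not close: you need the conditioned Lasserre solution to ``decompose along the time axis into independent subproblems,'' and the paper does \emph{not} establish anything of the sort, nor does it need to. Conditioning on a sparse set of markers does not make the restriction of a Lasserre moment vector to a time window a valid Lasserre solution for a smaller instance; in particular you have not addressed jobs whose fractional support straddles the cut between consecutive markers, and the per-slot capacity constraint $\sum_j x_{jt} \leq m$ ties all jobs at time $t$ together regardless of which chains they belong to. The paper's mechanism is quite different: after the dependence-reduction step it partitions $J^*$ into $J_{\textrm{top}}$, $J_{\textrm{middle}}$, $J_{\textrm{bottom}}$ according to the level of the owning interval, \emph{discards} all of $J_{\textrm{middle}}$ (a band of $k$ levels, creating a $2^k$-size gap between the granularities of top and bottom), recurses only on $J_{\textrm{bottom}}$ (which by the ownership definition is already confined to the subintervals of level $\ell^*$, so no decorrelation lemma is needed), and then inserts $J_{\textrm{top}}$ into the leftover capacity using a Hall's-theorem matching argument followed by an Earliest Deadline First pass that exploits the short-chain guarantee. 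The discard budget is charged at rate $O(\varepsilon/\log T)\cdot T^*$ per recursion level across $\log T$ levels, an additive rather than geometric composition. Without something like the top/middle/bottom split and the gap it creates, the ``pessimistic release time / deadline'' bookkeeping that makes the matching argument work has no analogue in your scheme, and I do not see how to repair the decomposition claim as you have stated it.
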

This gives a partial answer to one of the questions under ``Open Problem 1'' in \cite{TenOpenProblems-SchuurmanWoeginger1991}
which asked whether there is a PTAS for this problem. 
In a Dagstuhl workshop, Mathieu~\cite{DagstuhlOpenProblems2010} asked the more specific 
question whether the Sherali-Adams hierarchy gives a $(1 + \varepsilon)$-approximation 
after $c(\varepsilon,m)$ rounds. We also make progress on the question from the book of Garey and 
Johnson~\cite{GareyJohnson79}
by improving the $\frac{4}{3}$-polynomial time approximation for $m=3$~\cite{TwoSchedulingAlgorithms-Lam-Sethi-SICOMP77}
to a $1+\varepsilon$ in slightly more than quasi-polynomial time. In particular, this implies that  $Pm \mid \textrm{prec}, p_j = 1 \mid C_{\max}$  is
not $\mathbf{APX}$-hard, assuming that $\mathbf{NP} \not\subseteq \mathbf{DTIME}(n^{\log(n)^{O(\log\log n)}})$.




\section{An Explicit LP Hierarchy for Makespan Scheduling}

In principle, our result can be obtained by applying the well-known Sherali-Adams
hierarchy to the linear program in \eqref{eq:LP} --- of course the same still
holds true for even more powerful hierarchies such as the Lasserre SDP hierarchy. 
While this may be the preferable option for experts, we will work with 
an explicit strengthening of the above linear program that hopefully will be more
accessible to non-experts in LP hierarchies. For a set $K\subseteq \setR^m$ we denote
\[
\textrm{cone}(K) := \Big\{ \sum_{i=1}^k \lambda_ix_i \mid k \in \setN; \; x_i \in K \; \forall i \in [k]; \; \lambda_i \geq 0 \; \forall i \in [k] \Big\}
\] 
as the \emph{convex cone} that is spanned by $K$.

Let us fix a
parameter $r$.
Let $\sigma : J \to [T] \cup \{ *\}$ be a \emph{partial assignment} that
assigns slots
only for a subset of jobs. All the jobs with $\sigma(j) = *$ are unassigned.
Let $\textrm{supp}(\sigma) := \{ j \in J \mid j\textrm{ is assigned in
}\sigma\}$ be the support of that partial
assignment. We denote $\emptyset$ as the partial assignment that assigns
no job at all.
Moreover for a partial assignment $\sigma$ and $j \notin
\textrm{supp}(\sigma)$ and $t \in [T]$,
let $\sigma \cup (j,t)$ be the partial assignment augmented by
$\sigma(j) = t$.

We say that a solution to $\texttt{SA}(K(T),r)$ is a set of vectors 
$\bm{x} := \{x^\sigma\}_{|\textrm{supp}(\sigma)| \leq r}$,
where we define $x := x^\emptyset$
satisfying the following program:
\[
\begin{array}{rclll}
  x^{\sigma} &=& \sum_{t \in [T]} x^{\sigma \cup (j,t)} \hspace{2.3cm} \forall
\sigma: |\textrm{supp}(\sigma)| < r\textrm{ and }j \notin
\textrm{supp}(\sigma) & (I) \vspace{2mm} \\
  x^{\sigma} &\in& \textrm{cone}\big(K(T) \cap \{ x \mid x_{j,\sigma(j)} = 1 \;
\forall j \in \textrm{supp}(\sigma) \}\big) \quad \forall \sigma:
|\textrm{supp}(\sigma)| \leq r & (II) \vspace{2mm} \\
  x &\in& K(T) & (III)
\end{array}
\]
In other words, $\bm{x}$ is a collection of $n^{O(r)}$ many vectors $x^{\sigma}$
that each has dimension $|J| \cdot T$.
Note that if $x^{\sigma}$ is a non-zero vector, then it can be scaled to be a 
fractional solution in $K(T)$ that has all assignments of the partial assignment $\sigma$
integral. 
Notice that we have a variable for each $\sigma$ with $|\textrm{supp}(\sigma)| \leq r$, 
so one can find a feasible solution of the program in $n^{O(r)}$ time. 

One can think of this system as basically being the Sherali-Adams system, just that 
we do include more redundant variables that will make it easy to prove the needed
properties. 
First, we claim that if there exists a valid schedule $\sigma^*$, then
$\texttt{SA}(K(T),r) \neq \emptyset$. Here we can build a valid solution
by simply choosing $x^{\sigma}$ as the
characteristic vector of $\sigma^*$
if $\sigma$ and $\sigma^*$ agree. We set $x^{\sigma} = \bm{0}$ if there
is a job $j \in \textrm{supp}(\sigma)$ so that $\sigma(j) \neq \sigma^*(j)$.
We give the following useful properties:

\begin{lemma}\label{lem:HierarchyProperties}
Fix some $r$. Let $\bm{x} \in
\texttt{SA}(K(T),r)$. Let\footnote{Here $j \in J$ is any fixed job. But note that this definition does not depend on
the choice of $j$.} $\lambda_{\sigma} := \sum_{t=1}^T
x_{j,t}^{\sigma}$. Then the following holds
\begin{enumerate}
\item[a)]  If $\lambda_{\sigma}>0$, then
$\frac{x^{\sigma}}{\lambda_{\sigma}} \in K(T) \cap \{ x \mid
x_{j,\sigma(j)} = 1 \; \forall j \in \textrm{supp}(\sigma)\}$.
\item[b)] If $r=n$, then 
$x \in \textrm{conv}\big(K(T) \cap \{
0,1\}^{J \times [T]}\big)$.
\item[c)] Let $j^* \in J$ and $t^* \in [T]$ so that $\rho :=
x_{j^*,t^*}>0$.
Then taking $y^{\sigma} := \frac{1}{\rho} \cdot x^{\sigma \cup (j^*,t^*)}$ for each $\sigma$, one
has $\bm{y}= \{y^{\sigma}\}_{|\textrm{supp}(\sigma)| \leq r-1} \in \texttt{SA}(K(T),r-1)$ and $y_{j^*,t^*} = 1$.
Moreover,  $x_{j,t} = 0 \Rightarrow y_{j,t}= 0$ for all $j \in J$ and $t \in [T]$.
\end{enumerate}
\end{lemma}

\begin{proof}
We prove the following:
\begin{enumerate}
\item[a)] Follows from $(II)$ and the definition of $\lambda_{\sigma}$.
\item[b)] We can iteratively apply (I) to obtain 
\[
x =
\sum_{\sigma: J \to [T]} x^{\sigma} = \sum_{\sigma: J \to [T]:
\lambda_{\sigma} >0} \lambda_{\sigma} \cdot
\frac{x^{\sigma}}{\lambda_{\sigma}}.
\] 
By a),
$\frac{x^{\sigma}}{\lambda_{\sigma}}$ are $0/1$ vectors.
\item[c)] From the definition we can see that $(I),(II)$ are just
inherited. $(III)$ and $y_{j^*,t^*} = 1$
follow from the scaling. The implication $x_{j,t} = 0
\Rightarrow y_{j,t}= 0$
follows from $y_{j,t} = \frac{1}{\rho} \cdot
x^{(j^*,t^*)}_{j,t} \leq \frac{1}{\rho} x_{j,t}$.
\end{enumerate}
\end{proof}
If we have solution $\bm{x} \in \texttt{SA}(K(T),r)$
and variables $j^*, t^*$ with $x_{j^*,t^*}>0$, then \emph{conditioning on $x_{j^*,t^*}=1$} means to replace the solution $\bm{x} $ with the solution 
$\bm{y} = \{ y^{\sigma}\}_{|\textrm{supp}(\sigma)| \leq r-1} \in \texttt{SA}(K(T),r-1)$ described in Lemma~\ref{lem:HierarchyProperties}.c.

\section{An Overview}\label{sec:Overview}

In this section, we will give an overview over the different steps in our algorithm; the detailed 
implementation of some of the steps will be given in Section~\ref{sec:ReducingDependence}, Section~\ref{sec:SchedulingTopJobs} and Section~\ref{sec:AccountingDiscardedJobs}. 
For a given time horizon $T$, a \emph{feasible schedule} is an assignment $\sigma : J \to \{ 1,\ldots,T\}$
with $|\sigma^{-1}(t)| \leq m$ for all $t \in [T]$ and for all $j,j' \in J$ one has $j \prec j' \Rightarrow \sigma(j) < \sigma(j')$.
Formally, our main technical theorem is as follows: 
\begin{theorem} \label{thm:MainTechnicalTheorem}
For any 
solution 
$\bm{x} \in \texttt{SA}(K(T),r)$ with 
$r :=(\log n) ^{O((m^2/\varepsilon^2) \cdot \log\log n)}$,
one can find a feasible schedule $\sigma : J \to \setN$ of the jobs in 
time $n^{O(r)}$ so that 
\[
\max_{j \in J} \sigma(j) \leq (1 + \varepsilon) \cdot T.
\]
\end{theorem}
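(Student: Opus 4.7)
The plan is to give a recursive rounding algorithm. Each recursive call receives a pair $(J', y')$ where $J' \subseteq J$ is a set of jobs to be scheduled inside a time horizon $[T']$ and $y' \in \textsc{Las}_{r'}(K(T'))$ is a Lasserre solution of current round budget $r'$. The call partitions $[T']$ into sub-intervals of a smaller scale, assigns each $j \in J'$ to one of these sub-intervals, and recurses in each one. The recursion has depth $\Theta(\log \log n)$; at every level the sub-interval length shrinks by a factor of about $n^{1/\log\log n}$, so the leaf sub-problems have constant-size horizons that can be handled in closed form. The target relative slack $\varepsilon$ is split across levels (for instance, at most $\varepsilon/\log\log n$ extra blow-up per level), so that the total blow-up telescopes to $(1+\varepsilon)$.

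Inside a single recursive call, I would classify jobs as \emph{concentrated} (fractional support essentially contained in a single sub-interval, so they can safely be committed to it) or \emph{spanning} (support spread across many sub-intervals). By the LP precedence inequality $\sum_{t' \leq t} x_{it'} \geq \sum_{t' \leq t+1} x_{jt'}$, every spanning job must sit inside a long fractional precedence chain; these chains are the "correlations" mentioned in the abstract. To break them I would apply Lasserre conditioning (Lemma~\ref{lem:LasserreProperties}.c): conditioning on an event $x_{j,t}=0$ or $x_{j,t}=1$ yields a valid solution in $\textsc{Las}_{r'-1}(K(T'))$ and, by Lemma~\ref{lem:LasserrePropertiesForScheduling}.b, the support of every job can only shrink. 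After a controlled number of well-chosen conditionings per level, no spanning job survives; any jobs that still fail to be concentrated are collected into a small \emph{residual} set whose total $\frac{1}{m}$-load is at most $\varepsilon T' / \log\log n$, and are appended at the tail of the appropriate sub-interval by Graham's list-scheduling bound, costing only an additive $O(\varepsilon T'/\log\log n)$ in makespan per level.

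The hardest piece is the chain-reduction claim: that $(\log n)^{O(m^2/\varepsilon^2)}$ conditionings suffice per recursion level to eliminate every long fractional chain. The strategy would be to introduce a potential function capturing the expected total fractional load of chains whose LP completion-time spread exceeds an $\varepsilon$ fraction of the sub-interval length, and to argue, using the conditioning identities in Lemma~\ref{lem:LasserreProperties}.c together with the psd-ness of the moment matrix, that for every long chain there exists a conditioning which (in expectation over the two branches $x_{j,t}=0$ and $x_{j,t}=1$) multiplies the potential by a constant factor less than one. The exponent $m^2/\varepsilon^2$ plausibly enters because a block of width $\varepsilon T'$ houses $m \cdot \varepsilon T'$ slot-machine positions, so the space of "conflict configurations" a chain interacts with has dimension roughly $m/\varepsilon$, and a union-bound over these configurations squares the cost. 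Multiplying the per-level round usage over the $\Theta(\log\log n)$ levels of recursion yields the global bound $r = (\log n)^{O((m^2/\varepsilon^2) \log\log n)}$, and the running time $n^{O(r)}$ follows from solving a Lasserre SDP of size $n^{O(r')}$ at each recursive node.
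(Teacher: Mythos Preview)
Your proposal captures the right high-level shape --- recursion on sub-intervals, Lasserre conditioning to tame long fractional chains, residual jobs absorbed into additive slack --- but the central mechanism is missing, and the plan as written would not go through.

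The key structural idea in the paper is \emph{not} to condition until every job is concentrated in a single sub-interval. That would cost too many rounds: a job whose LP support genuinely spreads over many sub-intervals cannot be pinned down without essentially fixing its slot. Instead, the paper keeps three categories: \emph{bottom} jobs (already concentrated at level $\ell^*$), \emph{top} jobs (owned by levels $0,\ldots,\ell^*-k-1$, so still spanning many level-$\ell^*$ sub-intervals), and \emph{middle} jobs on $k$ intermediate levels that are simply discarded. The discard creates a $2^k$-factor gap between the interval lengths of top and bottom jobs. Bottom jobs are handled recursively. Top jobs are then scheduled \emph{after} the bottom schedule is fixed: one reads off release times and deadlines for each top job from where its bottom predecessors/successors landed, proves via a Hall-type matching argument that almost all top jobs fit in the leftover capacity (the $2^k$ gap is exactly what bounds the defect in Hall's condition), and finally runs Earliest-Deadline-First to repair precedence among top jobs, which works because conditioning has bounded the chain length within top jobs. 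None of these pieces --- the gap, the matching, or the EDF step --- appear in your plan.

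Relatedly, your chain-reduction claim is the wrong claim. The paper does not reduce a potential of ``long fractional chains'' multiplicatively; it conditions on interval-indicator variables $x_{j,I_2}$ (not single-slot variables) to push high-degree jobs into a child interval, and bounds the number of conditionings by a crude counting argument: each conditioning moves $\geq \frac{\delta}{2}|I|$ jobs down a level, and each job can move down at most $k^2$ times. Your proposed potential argument (``multiplies the potential by a constant factor less than one'') is not substantiated, and the heuristic for the $m^2/\varepsilon^2$ exponent is speculation. Finally, your recursion depth of $\Theta(\log\log n)$ with per-level shrink $n^{1/\log\log n}$ does not match the actual parameters; the paper's shrink factor per recursive call is $2^{\ell^*}$ with $\ell^* \in \{k,\ldots,k^2\}$ and $k = \Theta((m/\varepsilon)\log\log T)$, and the round budget is consumed additively across $\Theta(\log T)$ potential levels rather than multiplicatively over $\log\log n$ levels.
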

To obtain a $(1+\varepsilon)$-approximation, we can find the minimum value of $T$ so that $\texttt{SA}(K(T),r)\neq \emptyset$ 
with binary search and then compute a solution 
$\bm{x} \in \texttt{SA}(K(T),r)$. In particular, by virtue of being a relaxation, that value of $T$ will satisfy $T \leq OPT$, where 
$OPT$ is the makespan of the optimum schedule. For the sake of a simpler notation, we will assume that $T$
is a power of 2 --- if $2^{z-1} < T \leq 2^z$ for some integer $z$, then one can add $m \cdot (2^z-T)$ many dummy
jobs that all depend on each original job so that the algorithm will schedule the dummy jobs  at the
very end. Moreover we will assume that $\frac{1}{\varepsilon},m \leq \log(n)$ as otherwise the bound is meaningless.

The main routine of our algorithm will schedule jobs only within the time horizon $T$ of the LP-hierarchy solution, but we will allow it to \emph{discard} jobs. Formally this
means, we will find an assignment $\sigma : J \setminus J_{\textrm{discarded}} \to [T]$ that will not have assigned slots to jobs 
in $J_{\textrm{discarded}}$. Such an assignment will still be called ``feasible'' if apart from the load bound, 
the condition $j \prec j' \Rightarrow \sigma(j) < \sigma(j')$ is satisfied for all $j,j' \in J \setminus J_{\textrm{discarded}}$. 
In particular dependencies with discarded jobs play no role in this definition. 

The reason for this definition is that one can easily insert the discarded jobs at the very end of the algorithm: 
\begin{lemma} \label{lem:InsertingDiscardedJobs}
Any feasible schedule 
\[
\sigma : J \setminus J_{\textrm{discarded}} \to \{1,\ldots,T\}
\] 
can be modified in polynomial time 
to a feasible schedule 
\[
\sigma^* : J \to \{ 1,\ldots,T + |J_{\textrm{discarded}}|\}
\]
which also includes the previously discarded jobs. 
\end{lemma}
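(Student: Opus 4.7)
The plan is to prove the lemma by induction on $|J_{\textrm{discarded}}|$, adding one discarded job at a time into a freshly inserted time slot. If $|J_{\textrm{discarded}}| = 0$ there is nothing to do. Otherwise I would pick any job $j \in J_{\textrm{discarded}}$ that is minimal with respect to $\prec$ inside $J_{\textrm{discarded}}$ (i.e.\ no other discarded job precedes $j$), so that every predecessor of $j$ in the DAG is already scheduled by $\sigma$. Define $t^* := \max\{\sigma(j'') : j'' \prec j\}$, with the convention $t^*:=0$ if $j$ has no predecessor at all. I would then insert a new empty time slot at position $t^*+1$: shift every currently scheduled job with time $\geq t^*+1$ by $+1$, and place $j$ alone in the new slot at time $t^*+1$. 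Finally, remove $j$ from $J_{\textrm{discarded}}$ and recurse on the remaining $|J_{\textrm{discarded}}|-1$ jobs.

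The load bound is immediate because the new slot is used by a single job and no existing slot gains a job. Precedences among previously scheduled jobs are preserved because the shift is uniform. The new precedences created by adding $j$ split into two kinds. First, for any predecessor $j''\prec j$ one has $\sigma(j'')\leq t^*$, a value unaffected by the shift and hence $<t^*+1$. Second, for any successor $j'\succ j$ that is already scheduled I claim $\sigma(j')>t^*$: if $j''\prec j\prec j'$ then by transitivity $j''\prec j'$, and since at this stage both $j''$ and $j'$ lie in $J\setminus J_{\textrm{discarded}}$, the current schedule respects their precedence and $\sigma(j'')<\sigma(j')$; taking the maximum over all predecessors $j''$ of $j$ yields $t^*<\sigma(j')$, so after the shift $j'$ sits at time $\geq t^*+2>t^*+1$.

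Each iteration requires only a minimal-element computation in the DAG restricted to the remaining discarded jobs, one pass over the predecessors of $j$ to compute $t^*$, and a linear shift, so it runs in polynomial time. After $|J_{\textrm{discarded}}|$ iterations the schedule has the desired length $T+|J_{\textrm{discarded}}|$. The only delicate point is the transitivity step in the correctness argument; this is where the hypothesis that $\sigma$ respects precedence inside $J\setminus J_{\textrm{discarded}}$ is essential, since otherwise an already-scheduled successor of $j$ could sit earlier than a predecessor of $j$ and no feasible insertion position for $j$ would exist.
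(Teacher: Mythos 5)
Your proof is correct and takes essentially the same approach as the paper: both insert the discarded jobs one at a time into a freshly created time slot placed just after the last scheduled predecessor, observing via transitivity and feasibility of the current schedule that this slot comes before every scheduled successor. The only cosmetic difference is that you insist on choosing a $\prec$-minimal discarded job and compute $t^*$ explicitly as a maximum over scheduled predecessors, whereas the paper picks an arbitrary discarded job and merely asserts the existence of a separating time $t^*$; the justification in both cases is the same transitivity argument.
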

\begin{proof} 
Select any job $j^* \in J_{\textrm{discarded}}$. Since $\sigma$ is a valid schedule which respects all precedence
constraints in $J \setminus J_{\textrm{discarded}}$, there must be a time $t^*$ so that all jobs $j \prec j^*$ have $\sigma(j) \leq t^*$ and 
all jobs $j$ with $j^* \prec j$ have $\sigma(j) > t^*$. Then we insert an extra time unit after time $t^*$; in
this extra time slot, we only process $j^*$. We continue the procedure with inserting the next job from $J_{\textrm{discarded}} \setminus \{j^*\}$.
\end{proof}

Now, let us introduce some notation:
We can imagine the precedence order ``$\prec$'' as a directed transitive graph $G = (J,E)$ with the nodes
as jobs and edges $(j,j') \in E \Leftrightarrow j \prec j'$. In that view, let $\delta^+(j) := \{ j' \in J \mid j \prec j'\}$ be the jobs 
depending on $j$ and let  $\delta^-(j) := \{ j' \in J \mid j' \prec j\}$ be the jobs on which $j$ depends. 
Note that $\delta^+(j)$ and $\delta^-(j)$ are always distinct. We abbreviate $\delta(j) := \delta^+(j) \cup \delta^-(j)$ 
as the jobs that have any dependency with $j$. Finally, for a subset of jobs $J' \subseteq J$, 
let $\Delta(J') := \max\{ |\delta(j) \cap J'| + 1 \mid j \in J' \}$ be the 
maximum degree of a node in the subgraph induced by $J'$, counting also the node itself. 

We partition the time horizon $[T]$ into a balanced binary family $\pazocal{I}$ of intervals of lengths $T, \frac{T}{2}, \frac{T}{2^2},\ldots,2,1$. 
Let $\pazocal{I} := \pazocal{I}_0 \dot{\cup} \ldots \dot{\cup}
\pazocal{I}_{\log(T)}$ be the \emph{binary laminar family of intervals} that
we obtain by repeatedly partitioning
intervals into two equally-sized subintervals.
Recall that each \emph{level} $\pazocal{I}_{\ell}$ contains $2^{\ell}$
many interval $I \in \pazocal{I}_{\ell}$; each one consisting
of $|I| = \frac{T}{2^{\ell}}$ many time units. 
For each job $j \in J$ and each interval $I$, we now define 
$x_{j,I} := \sum_{t \in I} x_{j,t}$, which denotes how much 
of job $j$ will be scheduled somewhere within 
that interval $I$.

Our algorithm will schedule the jobs in  a recursive manner. The main claim is that for any interval $I^*$,  LP-hierarchy solution $\bm{x}^*$ and a set of jobs $J^*$ with $x_{j,I^*}^*=1$
we can schedule almost all jobs from $J^*$ within $I^*$ while respecting all precedence constraints. 

We use parameters $k := \frac{c_1m}{\varepsilon}\log \log (T)$ where $c_1>0$ is a large 
enough constant that we will choose 
in Section~\ref{sec:AccountingDiscardedJobs}, and $\delta := \frac{\varepsilon}{8k^2m2^{2k^2}\log(T)}$.
To get some intuition for the parameters, considering $\varepsilon$ and $m$ as fixed constants, 
one would have $k = \Theta(\log \log n)$ and $\delta = 1/\log(n)^{\Theta(\log \log n)}$.
Formally, the main technical lemma is the following: 
\begin{lemma} \label{lem:MainRecursiveAlgorithm}
Fix $\varepsilon>0$. Let $I^* \in \pazocal{I}$ be an interval from the balanced family of length $T^* := |I^*|$.
Let $\bm{x}^* \in \texttt{SA}(K(T),r^*)$ be an LP-hierarchy solution with 
\[
  r^* \geq \log(T^*) \cdot 2mk^2 \cdot 2^{k^2} / \delta.
\]
Let
$J^* \subseteq \{ j \in J : x_{j,I^*}^{*} = 1\}$.
Then one can find a feasible assignment $\sigma : J^* \setminus J_{\textrm{discarded}}^* \to I^*$ 
that discards only
\[
 |J_{\textrm{discarded}}^*| \leq \frac{\varepsilon}{2} \cdot \frac{\log (T^*)}{\log(T)} \cdot T^* + \frac{\varepsilon}{2m} \cdot |J^*|
\]
many jobs.
\end{lemma}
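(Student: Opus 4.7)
The plan is induction on $\log_2|I^*|$, the level of $I^*$ in the laminar family $\pazocal{I}$. For the base case $|I^*|=1$, say $I^*=\{t\}$, every $j\in J^*$ has $y^*_{j,t}=y^*_{j,I^*}=1$, and the capacity constraint $\sum_{j}y^*_{j,t}\le m$ forces $|J^*|\le m$, so all jobs fit at time $t$ with no discards.

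For the inductive step I split $I^*$ into its two children $I^*_L,I^*_R\in\pazocal{I}$ of length $|I^*|/2$. Every $j\in J^*$ satisfies $y^*_{j,I^*_L}+y^*_{j,I^*_R}=1$, so each job is fractionally split between the two halves. If this split were already integral and if, in addition, no precedence pair $i\prec j$ had $i$ committed to the right and $j$ to the left, then recursion on each half using the restriction of $y^*$ would conclude. My plan is to use a bounded sequence of Lasserre conditionings (Lemma~\ref{lem:LasserreProperties}(c)) to reach such a configuration, up to discarding a controlled number of jobs.

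The heart of the argument is a chain-reduction step (the content of Section~\ref{sec:ReducingDependence}): by conditioning on at most $2mk^2\cdot 2^{k^2}/\delta$ variables, produce a new vector $y'\in\textsc{Las}_{r'}(K(T))$ with $r'\ge\log_2(|I^*|/2)\cdot 2mk^2\cdot 2^{k^2}/\delta$, in which (i) every \emph{top job} --- heuristically, one sitting on a long dependency chain that straddles the $I^*_L/I^*_R$ boundary --- has been integrally committed to one of the two sides, and (ii) every remaining job $j$ satisfies $\max\{y'_{j,I^*_L},y'_{j,I^*_R}\}\ge 1-\delta$. Section~\ref{sec:SchedulingTopJobs} then places the committed top jobs at explicit time slots inside the appropriate half, and Section~\ref{sec:AccountingDiscardedJobs} bounds the discards accrued at this level by $\frac{\varepsilon}{2\log_2 T}\cdot|I^*|$ (the first-term slack freed when $\log_2|I^*|$ drops by $1$) plus a linear-in-$|J^*|$ amount that telescopes into the $\frac{\varepsilon}{2m}|J^*|$ budget. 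With $y'$ in hand, I partition the surviving jobs by the majority side, round the remaining $\le\delta$ cross-boundary mass to $\{0,1\}$ (the discarded fragments being absorbed by the same account), and recurse on each half using $y'$. The two recursive calls together consume at most $\frac{\varepsilon}{2}\cdot\frac{\log_2(|I^*|)-1}{\log_2 T}\cdot|I^*|$ from the first-term budget, leaving exactly the required level-local slack.

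The main obstacle is the chain-reduction step itself: one must show that a budget of only $2mk^2\cdot 2^{k^2}/\delta$ conditionings per level suffices to simultaneously break every boundary-straddling long chain and force every surviving job to be $\delta$-integral, while losing only a $\frac{\varepsilon}{2\log_2 T}$-fraction of $|I^*|$ to discards in the process. Balancing these three desiderata --- chain length, integrality gap, and discard mass --- against the Lasserre round budget is exactly what forces the choices $k=\Theta((m/\varepsilon)\log\log T)$ and $\delta=1/\mathrm{poly}\log(T)$ in the statement, and is the delicate technical core of the remaining sections.
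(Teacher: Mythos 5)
Your plan takes a genuinely different route from the paper, and unfortunately the departure introduces a gap at the crucial step.

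You propose a binary recursion: split $I^*$ into two halves $I^*_L, I^*_R$, use conditioning to make every surviving job $\delta$-integral across the boundary (i.e.\ $\max\{y'_{j,I^*_L},y'_{j,I^*_R}\}\ge 1-\delta$), discard the rest, and recurse on each half. The problem is that this near-integrality goal cannot be achieved with the stated budgets. The number of jobs with $y^*_{j,I^*_L}$ bounded away from $\{0,1\}$ can be as large as $\Theta(m T^*)$ (e.g.\ when most of $J^*$ is spread evenly across the two halves), and no sub-linear number of conditionings combined with a discard budget of $O(\varepsilon T^*/\log T)$ can eliminate all of them. You also attribute this to ``the content of Section~\ref{sec:ReducingDependence},'' but that section does something else: the conditioning there reduces the \emph{dependence degree} $\Delta(J(I,y^{**}))$ within each subinterval, so that no long chains remain among jobs owned by shallow levels; it never makes fractional jobs integral across a split, nor is it designed to. After Step~1 of the paper, the jobs at level $0$ are still free to be split roughly evenly between $I^*_L$ and $I^*_R$.

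The paper's actual strategy is structurally different and is exactly what avoids this obstacle. It does not try to resolve the boundary-straddling jobs by conditioning. Instead it picks a level $\ell^*\in\{k,\ldots,k^2\}$, recursively schedules the \emph{bottom} jobs (those already owned by subintervals at level $\ell^*$ or deeper, which are integral at that granularity for free), \emph{discards} the middle $k$ levels to create a geometric ``gap'' between top and bottom, and only then places the still-fractional top jobs into leftover capacity via a Hall-type matching argument (Lemma~\ref{lem:MakeRoomForJobs}) followed by EDF (Theorem~\ref{thm:EDF}). The gap of $k$ levels is what gives the matching enough slack: top jobs have intervals a factor $2^k$ longer than the bottom subintervals, so the $|N^+(J^*)|-|N(J^*)|$ loss is only $O(m 2^{-k} T^*)$. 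And the chain bound from Step~1 is what makes EDF lose only $pmC$ jobs. Your proposal contains neither the discarded-middle gap nor the matching step, and without them there is no mechanism for scheduling the jobs that remain fractional at the current level while still respecting precedence constraints among them. Incorporating those two ideas would essentially reconstruct the paper's proof and abandon the binary-split framing.
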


Before we move on to explain the procedure behind Lemma~\ref{lem:MainRecursiveAlgorithm}, we 
want to argue that it implies our main result, Theorem~\ref{thm:MainTechnicalTheorem}: 
\begin{proof}
We set $I^* := \{ 1,\ldots,T\}$ and $\bm{x}^{*} := \bm{x}$, then $J^* := J$ is a valid choice as trivially $x_{j,\{1,\ldots,T\}} = 1$
for any job. To satisfy the requirement of Lemma~\ref{lem:MainRecursiveAlgorithm} we need
\[
 \log(T) \cdot \frac{2mk^2 \cdot 2^{k^2}}{\delta} \leq (\log(n))^{O((\frac{m}{\varepsilon})^2 \log\log(n))}
\]
many levels of the hierarchy. Here we use that $k = \Theta(\frac{m}{\varepsilon} \log \log T)$, 
hence $2^{k} = (\log(T))^{\Theta(m/\varepsilon)}$ and $2^{k^2} = (2^k)^k = (\log(T))^{\Theta((\frac{m}{\varepsilon})^2 \log \log T)}$
(we want to point out that many of the lower order terms are absorbed into the $O$-notation
of the exponent and we assume that $\frac{1}{\varepsilon},m \leq \log(n)$).
Then Lemma~\ref{lem:MainRecursiveAlgorithm} returns a valid assignment $\sigma : J \setminus J_{\textrm{discarded}} \to [T]$ 
that discards only
\[
 |J_{\textrm{discarded}}| \leq \frac{\varepsilon}{2} \cdot \frac{\log(T)}{\log(T)} \cdot T + \frac{\varepsilon}{2m} \cdot |J|
 \leq \varepsilon \cdot T
\]
many jobs. Inserting those discarded jobs via Lemma~\ref{lem:InsertingDiscardedJobs} then
results in a feasible schedule of makespan at most  $(1+\varepsilon) \cdot T$.
\end{proof}

The rest of the manuscript will be devoted to proving Lemma~\ref{lem:MainRecursiveAlgorithm}. 
We fix a constant $\varepsilon > 0$ as the target value for our approximation ratio
and denote $T^* := |I^*|$ as the length of our interval.

Let us first argue how to handle the base case, which for us is if $\log(T^*) \leq k^2$. 
In that case, we have at most $mT^* \leq m2^{k^2}$ jobs. Hence, the 
LP-hierarchy lift has  $r^* \geq mT^*$ many levels and one can repeatedly condition on events
$x_{j,t}=1$ for $j \in J^*$ and $t \in I^*$ until one arrives at 
an LP hierarchy solution $\bm{x}^{**}$
with $x_{j,t}^{**} \in \{ 0,1\}$ for all $j \in J^*$. This then represents a valid schedule of
jobs $J^*$ in the interval $I^*$ without the need to discard any jobs.

We now come to a high-level description of the algorithm. 
Let $\pazocal{I}_0^*,\ldots,\pazocal{I}_{\log(T^*)}^*$ be the family of
subintervals of $I^*$, where $\pazocal{I}_{\ell}^*$ contains $2^{\ell}$ intervals of length $\frac{T^*}{2^{\ell}}$
each, see Figure~\ref{fig:DissectionForIStar}. 
For a job $j \in J^*$, we define $\ell(j,\bm{x}^*) := \max\{ \ell : \exists I \in \pazocal{I}_{\ell}^*\textrm{ with } \sum_{t \in I}x_{j,t}^{*} = 1\}$ as the level that \emph{owns} the job in the current 
LP-hierarchy solution. 
We also abbreviate $J(\ell,\bm{x}^*) := \{ j \in J^{*} \mid \ell(j,\bm{x}^*) = \ell\}$ as 
all jobs owned by level $\ell$. The algorithm is as follows:
\begin{itemize}
\item {\bf Step 1:} Starting with the 
LP-hierarchy solution $\bm{x}^*$, we can iteratively condition on events 
until we arrive at 
a solution $\bm{x}^{**}$
that has the property that for any interval $I \in \pazocal{I}_0^* \cup \ldots \cup \pazocal{I}_{k^2-1}^*$, the jobs owned by that interval have small dependence degree, that means $\Delta(J(I,\bm{x}^{**})) \leq \delta |I|$, where $J(I,\bm{x}^{**}) := \{ j \in J^* \mid I\textrm{ minimal with }\sum_{t \in I} x^{**}_{j,t}=1\}$.
If we then consider the set of jobs $J^{**} := \{ j \in J^* \mid 0 \leq \ell(j,\bm{x}^{**}) < k^2 \}$ 
owned by the first $k^2$ levels, 
the longest chain in $J^{**}$ will contain at most $k^2\delta T^*$ jobs.
We will show in Section~\ref{sec:ReducingDependence} that the number of required conditionings can be upperbounded by $2mk^2 \cdot 2^{k^2} / \delta$, 
which implies that $\bm{x}^{**} \in \texttt{SA}(K(T),r^* - 2mk^2 \cdot 2^{k^2} / \delta)$. 
\item {\bf Step 2:} From now on, we work with the modified 
LP-hierarchy solution $\bm{x}^{**}$. We select a 
level index $\ell^* \in \{ k,\ldots,k^2\}$ and partition the jobs in $J^*$ in three different groups: 
  \begin{itemize}
  \item The jobs on the top levels: $J_{\textrm{top}} := J(0,\bm{x}^{**}) \cup \ldots \cup J(\ell^*-k-1,\bm{x}^{**})$
  \item The jobs on the $k$ middle levels: $J_{\textrm{middle}} := J(\ell^*-k,\bm{x}^{**}) \cup \ldots \cup J(\ell^*-1,\bm{x}^{**})$
  \item The jobs on the bottom levels: $J_{\textrm{bottom}} := J(\ell^*,\bm{x}^{**}) \cup \ldots \cup J(\log(T^*),\bm{x}^{**})$
  \end{itemize}
Then we discard all jobs in $J_{\textrm{middle}}$. In Section~\ref{sec:AccountingDiscardedJobs} we will describe how the
index $\ell^*$ is chosen and in particular we will provide an upper bound on the number of 
discarded middle jobs. 
\item {\bf Step 3:} In this step, we will find a schedule for the bottom jobs. 
For this purpose, we call Lemma~\ref{lem:MainRecursiveAlgorithm} \emph{recursively} for each interval $I \in \pazocal{I}_{\ell^*}$ 
with a copy of the 
solution $\bm{x}^{**}$ and jobs $J_I := \{ j \in J_{\textrm{bottom}} \mid x_{j,I}^{**} = 1\}$. Here it is crucial that the intervals are disjoint but also 
the sets $J_I$ are disjoint for different intervals $I \in \pazocal{I}_{\ell^*}$.
Then Lemma~\ref{lem:MainRecursiveAlgorithm} returns a valid schedule of the form $\sigma_I : J_I \setminus J_{I,\textrm{discarded}} \to I$ for each interval $I \in \pazocal{I}_{\ell^*}$.
Let $J_{\textrm{bottom-discarded}} := \bigcup_{I \in \pazocal{I}_{\ell^*}} J_{I,\textrm{discarded}} \subseteq J_{\textrm{bottom}}$ be the union of jobs that were discarded
in those calls. The partial schedules $\sigma_I$ satisfy $|\sigma_I^{-1}(t)| \leq m$ for $t \in I$ and $|\sigma_I^{-1}(t)| = 0$
for $t \notin I$. We combine those schedules to a schedule  
\[
\sigma : J_{\textrm{bottom}} / J_{\textrm{bottom-discarded}} \to I^*.
\] 
From the disjointness of the intervals, it is clear that again  $|\sigma^{-1}(t)| \leq m$ for all $t \in I^*$. 
Moreover, if
$j \prec j'$ and $j,j' \in J_I$ for some interval $I \in \pazocal{I}_{\ell^*}$, then by the inductive hypothesis $\sigma(j) < \sigma(j')$. 
On the other hand, if $j \in J_I$ and $j' \in J_{I'}$ then we know by Lemma~\ref{lem:HierarchyProperties}.c that $I$ had to come before $I'$ since $x_{j,I}^{**} = 1 = x_{j',I'}^{**}$. 
\item {\bf Step 4:} We continue working with the previously constructed schedule   $\sigma$ that schedules 
the non-discarded bottom jobs. In this step, we will extend the schedule $\sigma$ and insert the jobs 
of $J_{\textrm{top}}$ in the remaining free slots. We will prove in Section~\ref{sec:SchedulingTopJobs} that this can be done 
without changing the position of any scheduled bottom job and without violating any precedence constraints. 
Again, we allow that the procedure discards a small number of additional jobs 
from $J_{\textrm{top}}$ that we will account for later. Eventually, the schedule $\sigma$ satisfies the claim 
for Lemma~\ref{lem:MainRecursiveAlgorithm}.
\end{itemize}

\begin{figure}
\begin{center}
\psset{xunit=0.6cm,yunit=0.6cm}
\begin{pspicture}(-2,-2)(15,9.3)
\drawRect{linewidth=1.0pt,fillstyle=solid,fillcolor=lightgray}{0}{-1}{16}{10}
\drawRect{linewidth=1.5pt,fillstyle=solid,fillcolor=lightgray}{0}{6}{16}{3}
\drawRect{linewidth=1.5pt,fillstyle=vlines,fillcolor=lightgray,hatchcolor=darkgray}{0}{3}{16}{3}
\drawRect{linewidth=0.5pt,fillstyle=none,fillcolor=lightgray}{0}{7}{8}{1}
\drawRect{linewidth=0.5pt,fillstyle=none,fillcolor=lightgray}{8}{7}{8}{1}
\multido{\N=0+4}{4}{\drawRect{linewidth=0.5pt,fillstyle=none,fillcolor=lightgray}{\N}{6}{4}{1}}
\multido{\N=0+1}{16}{\drawRect{linewidth=0.5pt,linestyle=solid,fillstyle=none,fillcolor=lightgray}{\N}{2}{1}{1}}
\multido{\N=0+1}{9}{\drawRect{linewidth=0.5pt,fillstyle=none,fillcolor=lightgray}{0}{\N}{16}{1}}
\rput[c](8,4.75){$\Huge{\vdots}$}
\rput[c](-1.2,8.5){$\pazocal{I}_0^*$}
\rput[c](-1.2,7.5){$\vdots$}
\rput[c](8,1.75){$\Huge{\vdots}$}
\rput[c](-1.2,6.5){$\pazocal{I}_{\ell^*-k-1}^*$}
\rput[c](-1.2,5.5){$\pazocal{I}_{\ell^*-k}^*$}
\rput[c](-1.2,4.5){$\vdots$}
\rput[c](-1.2,3.5){$\pazocal{I}_{\ell^*-1}^*$}
\rput[c](-1.2,2.5){$\pazocal{I}_{\ell^*}^*$}
\rput[c](-1.2,1.5){$\vdots$}
\rput[c](-1.2,0.5){$\pazocal{I}_{k^2-1}^*$}
\rput[c](-1.2,-0.5){$\pazocal{I}_{k^2}^*$}
\psbrace[nodesepB=5pt,braceWidthInner=4pt,braceWidthOuter=4pt](17,0)(17,9){$J^{**}$}
\psline{->}(0,-2)(18,-2) \multido{\N=0+1}{17}{\psline(\N,-1.8)(\N,-2.2)}
\rput[c](0.5,-1.5){$1$}
\rput[c](1.5,-1.5){$2$}
\rput[c](2.5,-1.5){$\ldots$}
\rput[c](15.5,-1.5){$T^*$}
\rput[c](17.5,-1.5){time}
\psbrace[nodesepB=5pt,braceWidthInner=4pt,braceWidthOuter=4pt,rot=180,nodesepA=-18pt](-2.5,9)(-2.5,6){$J_{\textrm{top}}$}
\psbrace[nodesepB=5pt,braceWidthInner=4pt,braceWidthOuter=4pt,rot=180,nodesepA=-29pt](-2.5,6)(-2.5,3){$J_{\textrm{middle}}$}
\psbrace[nodesepB=5pt,braceWidthInner=4pt,braceWidthOuter=4pt,rot=180,nodesepA=-32pt](-2.5,3)(-2.5,-2){$J_{\textrm{bottom}}$}
\end{pspicture}
\end{center}
\caption{Binary dissection of the interval $I^*$ used in the algorithm behind Lemma~\ref{lem:MainRecursiveAlgorithm}.\label{fig:DissectionForIStar}}
\end{figure}

The intuition behind the algorithm is as follows:  When we call the procedure recursively for intervals $I \in \pazocal{I}_{\ell^*}^*$
we cannot control where the jobs $J_I$ will be scheduled within that interval $I$. In particular the
decisions made in different intervals $I,I' \in \pazocal{I}_{\ell^*}^*$ will in general not be consistent.
But the discarding of the middle jobs creates a gap between the top 
jobs and the bottom jobs in the sense that the intervals of the top jobs are at least a factor $2^k$
longer than intervals of the bottom jobs. For a top job $j \in J_{\textrm{top}}$ we will be pessimistic and
assume that all the bottom jobs that $j$ depends on will be scheduled just at the very end of their
interval. Still, as those intervals are very short, we will be able to argue that the loss in the
flexibility is limited and most of the top jobs can be processed. As a second crucial ingredient, the
conditioning had the implication that the top jobs do not contain any long chains any more.
This will imply that a greedy
schedule of the top jobs will leave little idle time, resulting in only few discarded top jobs.

A high-level pseudo-code description of the whole scheduling algorithm can be found in Figure~\ref{fig:MainAlgorithm}:
\begin{figure}
\begin{center}
\psframebox{
\begin{minipage}{14.3cm}
$\textsc{QPTAS for Makespan Scheduling}$ \vspace{1mm} \hrule  \vspace{1mm}
{\bf Input:} Scheduling instance $(J,\prec)$, parameters $m \in \setN$ and $\varepsilon>0$ \\
{\bf Output:} $(1+\varepsilon)$-approximate schedule $\sigma$ \vspace{1mm} \hrule  \vspace{1mm}
\begin{enumerate*}
\item[(1)] Compute a solution $\bm{x} = (x^{\sigma})_{|\textrm{supp}(\sigma)| \leq r} \in \texttt{SA}(K(T),r)$ with $r := (\log(n))^{O(m^2/\varepsilon^2) \cdot \log \log(n)}$ and $T$ minimal
\item[(2)] Call $\textsc{RecursiveScheduling}(J,\bm{x},[T]) \to \sigma$
\item[(3)] Insert discarded jobs into schedule $\sigma$
\end{enumerate*}
\vspace{1mm} \hrule \vspace{1mm} \hrule  \vspace{1mm}
$\textsc{RecursiveScheduling}$ \vspace{1mm} \hrule  \vspace{1mm}
{\bf Input:} Jobs $J^*$, LP lift $\bm{x}^*$, interval $I^*$ with $\sum_{t \in I^*}x_{j,t}^*=1$ for $j \in J^*$ \\
{\bf Output:} Schedule $\sigma$ with some jobs discarded \vspace{1mm} \hrule  \vspace{-2mm}
\begin{enumerate*}
\item[(1)] Build binary family of intervals $\pazocal{I}^*$
\item[(2)] Call ${\textsc{Breaking Long Chains}}$ $\to \bm{x}^{**}$
\item[(3)] Select partition into top, middle, bottom jobs. Pick $\ell^*$. 
\item[(4)] Discard middle jobs.
\item[(5)] For each $I \in \pazocal{I}_{\ell^*}$ set $J_I := \{ j \in J^* \mid x_{j,I}^{**}=1\}$ and \\ call $\textsc{RecursiveSchedule}(J_I,\bm{x}^{**},I) \to \sigma_I$
\item[(6)] Combine $\sigma_I$'s to one schedule $\sigma$
\item[(7)] Call two-phased algorithm based on matching and EDF to insert top jobs into $\sigma$
\end{enumerate*}
\end{minipage}}
\end{center}
\caption{High-level description of main algorithm\label{fig:MainAlgorithm}.}
\end{figure}

\section{Step (1) --- Reducing Dependence} \label{sec:ReducingDependence}

In this section we will implement ``Step (1)'' and show that we can reduce the maximum dependence degrees of 
the jobs owned by the first $k^2$ levels in order to bound the length of chains. 
We will do this by conditioning on up to
$2mk^2 \cdot 2^{k^2} / \delta$ many variables.
We are considering an interval $I^*$ and 
a subset of jobs $J^*  \subseteq J$ that the vector $x^{*}$ from the current 
LP-hierarchy solution $\bm{x}^*$ fully
schedules within $I^*$. Recall that for one of the subintervals $I \in \pazocal{I}_{\ell}^*$ 
below $I^*$, we write $J(I,\bm{x}^*) = \{ j \in J(\ell,\bm{x}^{*}) \mid x_{j,I}^{*} = 1 \}$
as the jobs owned by that particular interval.
\begin{lemma} \label{lem:BreakingChains}
Let 
$\bm{x}^* \in \texttt{SA}(K(T),r^*)$.
Then one can find an induced solution $\bm{x}^{**} \in \texttt{SA}(K(T),r^{**})$
with $r^{**} := r^* -2mk^2 \cdot 2^{k^2} / \delta$
so that $\Delta(J(I,\bm{x}^{**})) \leq \delta \cdot |I|$ for all
intervals $I \in \pazocal{I}_0^* \cup \ldots \cup \pazocal{I}_{k^2-1}^*$.
\end{lemma}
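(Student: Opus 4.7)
The plan is to build $y^{**}$ from $y^*$ by a sequence of single-variable conditionings, each costing one Lasserre round (Lemma~\ref{lem:LasserreProperties}.c), until every interval $I \in \pazocal{I}_0^* \cup \ldots \cup \pazocal{I}_{k^2-1}^*$ satisfies $\Delta(J(I,y)) \leq \delta|I|$. At each step, pick an interval $I$ at some level $\ell < k^2$ that violates the bound, and a witness $j \in J(I,y)$ with $|\delta(j) \cap J(I,y)| \geq \delta|I|$. By disjointness of $\delta^+(j)$ and $\delta^-(j)$, WLOG $j$ has at least $\delta|I|/2$ predecessors in $J(I,y)$ (successors are handled symmetrically). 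Splitting $I = I_a \dot\cup I_b$ into halves, the fact that $j$ is owned exactly by $I$ gives $y_{j,I_a}, y_{j,I_b} < 1$ with $y_{j,I_a} + y_{j,I_b} = y_{j,I} = 1$, so both lie in $(0,1)$ and we may condition on $x_{j,I_a} = 1$ (use $x_{j,I_b}=1$ in the successor case).

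The crucial claim is that in the resulting vector $y'$, the job $j$ and every predecessor $i \prec j$ with $i \in J(I,y)$ now have owning level at least $\ell+1$. Write $I = \{t_0+1,\ldots,t_0+L\}$ with $L=|I|$; by construction of the conditioning, $\sum_{t' \leq t_0+L/2} y'_{j,t'} = 1$. The LP precedence constraint of $K(T)$ at $t = t_0+L/2-1$, applied to the feasible singleton-projection of $y'$ (Lemma~\ref{lem:LasserreProperties}.a), then forces $\sum_{t' \leq t_0+L/2-1} y'_{i,t'} \geq 1$. Combined with support-monotonicity $\textrm{supp}(i,y') \subseteq \textrm{supp}(i,y) \subseteq I$ from Lemma~\ref{lem:LasserrePropertiesForScheduling}.b, this pins $\textrm{supp}(i,y')$ inside $\{t_0+1,\ldots,t_0+L/2-1\}$, a strict sub-interval of $I_a$. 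Hence the owning level of $i$ (and of $j$) strictly increases.

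To bound the number of rounds, use the potential
\[
  \Phi(y) \;:=\; \sum_{j \in J^* :\, \ell(j,y) < k^2}\bigl(k^2 - \ell(j,y)\bigr).
\]
Support-monotonicity makes $\ell(j,\cdot)$ nondecreasing under any conditioning, so $\Phi$ never grows. The step above pushes $j$ together with $\geq \delta|I|/2$ predecessors from level $\ell$ to level $\geq \ell+1$, decreasing $\Phi$ by at least $\delta|I|/2 \geq \delta T^* / 2^{k^2}$ (since $|I| \geq T^* / 2^{k^2-1}$). Initially $\Phi(y^*) \leq k^2 |J^*| \leq k^2 m T^*$, the last inequality coming from $|J^*| = \sum_{j \in J^*}\sum_{t \in I^*} y^*_{j,t}$ together with the machine-capacity constraint $\sum_j x_{j,t} \leq m$ in $K(T)$. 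Dividing yields at most $k^2 m T^* \cdot 2^{k^2}/(\delta T^*) = k^2 m \cdot 2^{k^2}/\delta \leq 2mk^2 \cdot 2^{k^2}/\delta$ conditionings, so the resulting $y^{**}$ lies in $\textsc{Las}_{r^{**}}(K(T))$ as required.

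The main obstacle I expect is the structural claim in the second paragraph: one must carefully combine the LP precedence inequality (valid on the singleton projection of the post-conditioning Lasserre vector) with support-monotonicity so that each predecessor's \emph{entire} support lands in a strict sub-interval of $I_a$, not merely that it loses some mass near the right endpoint of $I_a$. Once this structural step is established, the predecessor/successor dichotomy that determines the conditioning choice and the global potential accounting are essentially bookkeeping.
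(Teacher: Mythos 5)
Your proposal is correct and follows essentially the same route as the paper: pick a high-degree job $j$ in a violating interval, condition $j$ into whichever half separates it from the larger of its predecessor/successor sets (the paper phrases this via successors and $x_{j,I_2}=1$, you via predecessors and $x_{j,I_a}=1$, which are symmetric), use the LP precedence constraints together with support monotonicity to conclude that $j$ and at least $\frac{\delta}{2}|I|$ of its neighbors strictly descend in the laminar hierarchy, and amortize. Your potential function $\Phi(y)=\sum_{j:\ell(j,y)<k^2}(k^2-\ell(j,y))$ is just a repackaging of the paper's counting argument that each job moves out of an interval at levels $0,\ldots,k^2-1$ at most $k^2$ times; both yield the same $2mk^2\cdot 2^{k^2}/\delta$ bound.
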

\begin{proof}
We set initially $\bm{x}^{**}:= \bm{x}^*$. If there is any interval $I = I_1
\dot{\cup} I_2 \in \pazocal{I}_0 \cup \ldots \cup \pazocal{I}_{k^2-1}$ with
$\Delta(J(I,\bm{x}^{**})) > \delta \cdot |I|$, then we must have a job
$j \in J(I,\bm{x}^{**})$
that has either $|\delta^+_{J(I,\bm{x}^{**})}(j) \cup \{ j\}| \geq
\frac{\delta}{2} \cdot  |I|$ or 
$|\delta^-_{J(I,\bm{x}^{**})}(j) \cup \{ j\}| \geq
\frac{\delta}{2} \cdot  |I|$. We assume that 
$|\delta^+_{J(I,\bm{x}^{**})}(j) \cup \{ j\}| \geq
\frac{\delta}{2} \cdot  |I|$ holds and omit the other case, which is symmetric.
Then we pick a time $t \in I_2$ with $x^{**}_{j,t} > 0$ and
 replace $\bm{x}^{**}$ by the 
LP-hierarchy solution conditioned on the
event ``$x_{j,t}^{**} = 1$''. Note that this
means that all jobs in $\delta_{J(I,\bm{x}^{**})}^+(j) \cup \{j\}$ will be removed
from $J(I,\bm{x}^{**})$. In fact, each such job will be moved 
to $J(I',\bm{x}^{**})$ where $I' \subseteq I_2$
is some subinterval. 
The conditioning can also change the owning interval of
other jobs, but for each job $j$, the set  
of times $t$ such that $x_{j,t}^{**}>0$
can only \emph{shrink} if we condition on
any event, see Lemma~\ref{lem:HierarchyProperties}.c. 
Hence jobs only move from intervals to subintervals. 

Since in each iteration, at least $\frac{\delta}{2} \cdot |I| \geq \frac{\delta}{2} \cdot  \frac{T^*}{2^{k^2}}$ many
jobs ``move'' and each job moves at most 
$k^2$ many
times out of an interval
in $\pazocal{I}_0^* \cup \ldots \cup \pazocal{I}_{k^2-1}^*$, we need to condition at most 
\[
\frac{2mT^* \cdot k^2}{\delta \frac{T^*}{2^{k^2}}} = 2mk^2 \cdot \frac{2^{k^2}}{\delta}
\]
many times.
\end{proof}

The implication of Lemma~\ref{lem:BreakingChains} is that the set of jobs owned by
intervals $I \in \pazocal{I}_0^* \cup \ldots \cup \pazocal{I}_{k^2-1}^*$ will not contain long chains,
simply because we have only few intervals and none of jobs owned
by a single interval contain long chains anymore.
\begin{lemma} \label{lem:BoundedChains}
After applying Lemma~\ref{lem:BreakingChains}, the longest chain within
jobs owned by intervals $I \in \pazocal{I}_0^* \cup \ldots \cup \pazocal{I}_{k^2-1}^*$
has length at most $k^2 \delta T^*$.
\end{lemma}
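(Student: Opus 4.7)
The strategy is to fix any chain $j_1 \prec j_2 \prec \cdots \prec j_p$ lying in $J^{**} := \bigcup_{\ell=0}^{k^2-1} J(\ell,y^{**})$ and to bound its length by decomposing it first by the owner level $\ell(j_i,y^{**})$ and then, inside a level, by the owning interval $I \in \pazocal{I}_\ell^*$. Writing $c_\ell$ for the number of chain elements whose owner level is exactly $\ell$, the claim $p \leq k^2 \delta T^*$ will follow as soon as I establish the uniform bound $c_\ell \leq \delta T^*$ for every $\ell \in \{0,\ldots,k^2-1\}$ and sum over the $k^2$ relevant levels.

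The key intermediate fact I will use is purely combinatorial: for any vertex set $J'$ of the precedence graph, the longest chain contained in $J'$ has length at most $\Delta(J')$. This is immediate, since if $a_1 \prec \cdots \prec a_L$ is a chain inside $J'$ then each of the $L-1$ elements other than $a_1$ lies in $\delta(a_1) \cap J'$, giving $L-1 \leq |\delta(a_1) \cap J'| \leq \Delta(J') - 1$. Plugging $J' = J(I,y^{**})$ into this estimate and invoking Lemma~\ref{lem:BreakingChains}, which supplies $\Delta(J(I,y^{**})) \leq \delta |I| = \delta T^*/2^\ell$ for every interval $I \in \pazocal{I}_\ell^*$ at level $\ell < k^2$, I conclude that the chain contributes at most $\delta T^*/2^\ell$ elements inside any single level-$\ell$ interval.

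Summing this per-interval bound over a fixed level $\ell$ is where the two factors cancel. There are exactly $2^\ell$ disjoint intervals in $\pazocal{I}_\ell^*$, all of them subintervals of $I^*$, and because these intervals are pairwise disjoint each job $j \in J(\ell,y^{**})$ is owned by a unique $I \in \pazocal{I}_\ell^*$. Therefore $c_\ell \leq 2^\ell \cdot \delta T^*/2^\ell = \delta T^*$, and summing over $\ell = 0, 1, \ldots, k^2-1$ yields $p \leq k^2 \delta T^*$, as desired.

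I do not foresee any serious obstacle. The argument is purely combinatorial in $y^{**}$, exploiting only the $\Delta$-bound supplied by Lemma~\ref{lem:BreakingChains} together with the laminar structure of $\pazocal{I}^*$; in particular, it never needs Lemma~\ref{lem:LasserrePropertiesForScheduling}.a to reason about the temporal order in which chain elements appear, nor does it need to rule out chains that jump between levels. The one minor hygienic check is that the per-level decomposition is actually well-defined, i.e., that a job of owner level $\ell$ lies in $J(I,y^{**})$ for exactly one $I \in \pazocal{I}_\ell^*$; this is immediate from the disjointness of the intervals at a single level of the laminar family and from the fact that $\ell(j,y^{**})$ is precisely the deepest level at which $j$'s support fits inside a single interval.
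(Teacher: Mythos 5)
The proposal is correct and follows essentially the same decomposition as the paper's proof: bound the chain's contribution per level-$\ell$ interval by $\Delta(J(I,y^{**})) \leq \delta|I|$, sum over the $2^\ell$ disjoint intervals of level $\ell$ to get $\delta T^*$, and sum over the $k^2$ levels. Your only addition is an explicit justification that a chain inside $J'$ has length at most $\Delta(J')$ (since by transitivity all later chain elements lie in $\delta(a_1)\cap J'$), a step the paper leaves implicit.
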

\begin{proof}
First, let us argue how many jobs a chain can have that are all assigned
to intervals of the same level
$\ell$. From each interval $I$, the chain can only include $\delta |I| = \delta \cdot
\frac{T^*}{2^{\ell}}$ many jobs.
Since $|\pazocal{I}_{\ell}| = 2^{\ell}$, the total number of jobs from
level $\ell$ is bounded by $\delta T^*$.
The claim follows from the pigeonhole principle and the fact that we have $k^2$
many levels in $\pazocal{I}_0^* \cup \ldots \cup \pazocal{I}_{k^2-1}^*$.
\end{proof}

We can summarize the algorithm from Lemma~\ref{lem:BreakingChains} as follows: 
\begin{center}
\psframebox{
\begin{minipage}{14.1cm}
{\sc Breaking long chains} \vspace{1mm} \hrule  \vspace{1mm}
{\bf Input:} Scheduling instance with jobs $J^*$, a precedence order, an 
LP-hierarchy solution $\bm{x}^* \in \texttt{SA}(K(T),r^*)$, an interval $I^*$ \\
{\bf Output:} An 
LP-hierarchy solution $\bm{x}^{**}$ with maximum chain length $k^2\delta T^*$ in $\bigcup_{\ell=0}^{k^2-1} J(\ell,\bm{x}^{**})$   \vspace{1mm} \hrule \vspace{2mm}
\begin{enumerate*}
\item[(1)] Make a copy $\bm{x}^{**} := \bm{x}^*$
   \item[(2)] WHILE $\exists I = (I_1\dot{\cup}I_2)\in \bigcup_{\ell=0}^{k^2-1} \pazocal{I}_{\ell}^*$ WITH  
$\Delta(J(I,\bm{x}^{**})) > \delta |I|$ DO 
      \begin{enumerate*}
      \item[(3)] Choose a job $j \in J(I,\bm{x}^{**})$ with $|\delta_{J(I,\bm{x}^{**})}(j) \cup \{j\}| \geq \delta |I|$. 
      \item[(4)] If $|\delta_{J(I,\bm{x}^{**})}^+(j) \cup \{ j\}| \geq \frac{\delta}{2} \cdot |I|$ THEN condition in $\bm{x}^{**}$ on $x_{j,t}^{**}=1$ for some $t\in I_2$\\
ELSE condition on $x_{j,t}^{**}=1$ for some $t\in I_1$. 
      \end{enumerate*} 
\end{enumerate*}
\end{minipage}}
\end{center}
Note that after each conditioning in step (6), the solution $\bm{x}^{**}$ will change and the 
set $J(I,\bm{x}^{**})$ will be updated. 


\section{Step (4) --- Scheduling Top Jobs}\label{sec:SchedulingTopJobs}

Consider the algorithm from Section~\ref{sec:Overview} and the state at the end of
Step 3. At this point, we have a schedule $\sigma$ that schedules most of the bottom jobs.
The main argument that remains to be shown is how to add in the top jobs
which are owned by intervals in $\pazocal{I}_0^* \cup \ldots \cup \pazocal{I}_{\ell^* - k - 1}^*$.
This is done in two steps.
First, we use a \emph{matching-based} argument to show that most top jobs can be inserted 
in the existing schedule so that
the precedence constraints with the bottom jobs are respected. In this step, we will be 
discarding up to  $4m \cdot 2^{-k} \cdot T^*$ many jobs. More crucially, the schedule
will not have satisfied precedence constraints within $J_{\textrm{top}}$.
In a 2nd step, we temporarily remove the top jobs from the schedule and reinsert them 
with a variant of the \emph{Earliest Deadline First (EDF)} scheduling. 
As we will see later in Theorem~\ref{thm:EDF}, this results in at most 
 $\frac{\varepsilon}{8\log T} \cdot T^*$ additionally discarded jobs. 

\subsection{A Preliminary Assignment of Top Jobs}

Let us recall what we did so far. In Step 3, we applied Lemma~\ref{lem:MainRecursiveAlgorithm} 
recursively on each interval $I \in \pazocal{I}_{\ell^*}$ to schedule the bottom jobs. We already
argued that the resulting schedules could be combined to a schedule 
$\sigma : J_{\textrm{bottom}} / J_{\textrm{bottom-discarded}} \to I^*$ that respects all precedence
constraints. 

Let the intervals in $\pazocal{I}_{\ell^*}^*$ be called $I_1,\ldots,I_p$, so that  the time horizon $T^*$ is partitioned into
$p$ equally sized
\emph{subintervals} with $p = 2^{\ell^*}$. After reindexing the time horizon, let us assume for the
sake of a simpler notation that $I^* = \{ 1,\ldots,T^*\}$. If we abbreviate $t_i := i \cdot
\frac{T^*}{p}$ for $i \in \{ 0,\ldots,p\}$,
then the $i$th interval contains the time periods $I_i := \{
t_{i-1}+1,\ldots,t_i\}$.
Each time $t$ has an available \emph{capacity} of $\textrm{cap}(t) = m-|\sigma^{-1}(t)| \in \{ 0,\ldots,m\}$ many machines, which is the number of machines not used by jobs in $J_{\textrm{bottom}}$.
We abbreviate $\textrm{cap}(I_i) := \sum_{t \in I_i} \textrm{cap}(t)$ as
the capacity of interval $I_i$.

The available positions of jobs in $J_{\textrm{top}}$ are constrained by the scheduled times of jobs in $J_{\textrm{bottom}}$. As we had no further control over the exact position of the bottom jobs within their intervals $I_i$, we want to define for each job $j \in J_{\textrm{top}}$  a \emph{release time} $r_j$ and a \emph{deadline} $d_j$ determined by the most pessimistic outcome of how $\sigma$ could have scheduled
the bottom jobs.  
%
For all $j \in J_{\textrm{top}}$, we define
\begin{eqnarray}
r_j &:=& \min\left\{t_i+1 \mid \sigma(j')\leq t_i \; \forall j'\in J_{\textrm{bottom}} : j' \prec j\right\} \label{eq:ReleaseTimeDeadlineDef}\\
d_j &:=& \max\left\{t_i \mid \sigma(j')\geq t_i+1 \; \forall j'\in J_{\textrm{bottom}} : j \prec j'\right\} \nonumber
\end{eqnarray}
In particular, the release time will be the first time unit of an interval $I_i$ and the deadline
will be the last time unit of an interval $I_i$. 
Let $i_r(j)$ and $i_{d}(j)$ be the corresponding indices, so that the release time
is of the form $r_j = t_{i_r(j)-1}+1$ and the deadline is $d_j = t_{i_d(j)}$. Then our goal is to show that
most top jobs $j$ can be scheduled somewhere in the time frame
$I_{i_r(j)} \cup \ldots \cup I_{i_d(j)}$. This would imply that at least all precedence constraints between bottom and top jobs are going to be satisfied. 

\begin{figure}
\begin{center}
\psset{xunit=0.80cm,yunit=0.7cm}
\begin{pspicture}(0,-0.8)(16,3.9)
\multido{\N=0+2}{8}{
\drawRect{fillstyle=solid,fillcolor=lightgray}{\N}{0}{2}{1}
}
\multido{\N=0.0+0.5}{33}{
\psline(\N,0)(\N,-5pt)
}
\drawRect{fillstyle=vlines}{4}{0}{6}{1}
\drawRect{fillstyle=vlines,hatchcolor=gray}{2}{0}{2}{1}
\drawRect{fillstyle=vlines,hatchcolor=gray}{10}{0}{2}{1}
\rput[c](-0.25,-10pt){$t_0$}
\rput[c](0.25,-10pt){$1$}
\rput[c](0.75,-10pt){$2$}
\rput[c](1.75,-10pt){$t_1$}
\rput[c](3.75,-10pt){$t_{i_r(j)}$}
\rput[c](9.75,-10pt){$t_{i_d(j)}$}
\rput[c](15.75,-10pt){$t_{p}$} \rput[c](15.75,-20pt){$=T^*$}
\rput[c](1,1.5){$I_1$}
\rput[c](5,1.5){$I_{i_r(j)}$}
\rput[c](9,1.5){$I_{i_d(j)}$}
\rput[c](15,1.5){$I_{p}$}
\pnode(4.25,1){A} \pnode(4.25,2){B} \ncline{->}{B}{A} \nput{90}{B}{$r_j$}
\pnode(9.75,1){A} \pnode(9.75,2){B} \ncline{->}{B}{A} \nput{90}{B}{$d_j$}
\end{pspicture}
\caption{Visualization of interval $I^* = I_1 \dot{\cup} \ldots \dot{\cup} I_{p}$ and possible release times and deadlines for a job $j \in J_{\textrm{top}}$. Note that 
$x^{**}$ might schedule $j$ over the whole hatched area, while our choice of $r_j$ and $d_j$ will force us to process $j$ inside the black-hatched area (or to discard the job).\label{fig:ReleaseTimeDeadline}}
\end{center}
\end{figure}

Notice here that the existing fractional assignment that $\bm{x}^{**}$ provides for $j \in J_{\textrm{top}}$ might 
also be using the slots in the two intervals coming right before and right after  the range $[r_j, d_j]$ (see Figure~\ref{fig:ReleaseTimeDeadline}). This is due to our rounding of release times and deadlines to interval beginnings and ends.
Let $J_{\textrm{bottom}}' := J_{\textrm{bottom}} \setminus J_{\textrm{bottom-discarded}}$ be the bottom jobs 
scheduled by the recursive calls of the algorithm.

\begin{lemma} \label{lem:MakeRoomForJobs} A valid schedule $\sigma : J_{\textrm{bottom}}' \to I^*$ of bottom jobs can be extended to a schedule $\sigma : J_{\textrm{bottom}}' \cup J_{\textrm{top}}' \to I^*$ 
with $J_{\textrm{top}}' \subseteq J_{\textrm{top}}$
that includes most of the top jobs. The schedule satisfies (i) $|\sigma^{-1}(t)| \leq m$ for $t \in I^*$; (ii)  $r_j \leq \sigma_j \leq d_j$ for all $j \in J_{\textrm{top}}'$ and (iii) one discards at most $|J_{\textrm{top}} \setminus J_{\textrm{top}}'| \leq 4m \cdot 2^{-k} \cdot T^*$ many top jobs. 
\end{lemma}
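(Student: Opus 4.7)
I would reduce Lemma~\ref{lem:MakeRoomForJobs} to a bipartite $b$-matching problem at the granularity of the sub-intervals $I_1,\ldots,I_p \in \pazocal{I}_{\ell^*}^*$: the top jobs sit on the left, the sub-intervals on the right, and $(j,I_i)$ is an edge iff $I_i \subseteq [r_j,d_j]$, which is well-defined because both $r_j$ and $d_j$ are by construction sub-interval endpoints. Each top job has supply $1$ and each $I_i$ has capacity $m|I_i|-|\sigma^{-1}(I_i)|$, the number of machine-slots left free by the given bottom schedule. Since bipartite $b$-matchings with integer capacities have integer optima, any fractional lower bound on the matching value immediately yields an integral sub-interval assignment, which can then be expanded to a concrete time-slot placement inside each $I_i$ because the free slots inside $I_i$ sum to exactly its capacity. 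Thus properties (i) and (ii) of the lemma are automatic.

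To lower-bound the fractional matching, I would exhibit the explicit assignment $x_{j,I_i}:=y^{**}_{j,I_i}$ for every edge $(j,I_i)$. Sub-interval feasibility is immediate: the bottom jobs owned inside $I_i$ already account for $\sum_{j\in J_{\textrm{bottom}}}y^{**}_{j,I_i}=|J_{I_i}|\ge|\sigma^{-1}(I_i)|$ units of the global LP budget $\sum_{j\in J}y^{**}_{j,I_i}\le m|I_i|$, so what remains for top jobs is at least $\sum_{j\in J_{\textrm{top}}} x_{j,I_i}$. The fractional objective equals $|J_{\textrm{top}}|-\sum_{j}\textrm{lost}_j$ with $\textrm{lost}_j:=y^{**}_{j,I(j)\setminus[r_j,d_j]}$, and so property (iii) reduces to proving $\sum_{j\in J_{\textrm{top}}} \textrm{lost}_j \le 4m\cdot 2^{-k}T^*$.

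The main obstacle is exactly that bound. The starting observation is that the precedence inequality in Eq.~\eqref{eq:LP}, applied to the latest bottom predecessor $j'\prec j$ sitting inside the sub-interval $I_{i_r(j)-1}\subseteq I(j)$ (where $y^{**}_{j',I_{i_r(j)-1}}=1$), forces $y^{**}_{j,t'}=0$ for all $t'\le t_{i_r(j)-2}$; symmetrically, the earliest bottom successor suppresses all mass of $j$ strictly above $t_{i_d(j)+1}$. Consequently each side's loss collapses into one boundary sub-interval, and therefore $\textrm{lost}_j\le y^{**}_{j,I_{i_r(j)-1}}+y^{**}_{j,I_{i_d(j)+1}}$: every top job loses the mass of at most two of the $\ge 2^{k+1}$ sub-intervals that constitute its owner interval $I(j)$. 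The $2^{-k}$ factor in the claimed bound should then come from aggregating these at-most-two-per-job boundary contributions against the per-sub-interval LP budget $\sum_{j}y^{**}_{j,I_i}\le m|I_i|$, using crucially that every top-owner interval is a factor $\ge 2^{k+1}$ longer than the sub-interval granularity, so that on average only a $2^{-k}$ fraction of any top job's mass is exposed as boundary mass; making this averaging fully rigorous is the delicate step of the proof.
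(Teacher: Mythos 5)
Your setup --- bipartite $b$-matching at the granularity of $\pazocal{I}_{\ell^*}^*$, with feasibility certified by the capacities $m|I_i| - |\sigma^{-1}(I_i)|$ --- is essentially the same as the paper's. Your per-job bound $\textrm{lost}_j \le y^{**}_{j,I_{i_r(j)-1}} + y^{**}_{j,I_{i_d(j)+1}}$ is also correct (the precedence constraint in $K(T)$ really does zero out $j$'s mass before $t_{i_r(j)-2}+1$ and after $t_{i_d(j)+1}-1$). The genuine gap is the aggregation step, which you flag yourself as ``the delicate step,'' and it is more than delicate: the averaging you gesture at does not go through. The two boundary subintervals $I_{i_r(j)-1}, I_{i_d(j)+1}$ depend on $j$ and can range over essentially all of $I(j)$ (any subinterval in the left half of $I(j)$ can host the latest bottom predecessor, and symmetrically on the right), so the set of distinct boundary subintervals can have size $\Theta(p)$. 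Summing the per-subinterval budget $m T^*/p$ over that many subintervals gives only $\Theta(mT^*)$, and because the LP is free to \emph{concentrate} a top job's unit of mass on its boundary subinterval rather than spread it over the $\ge 2^{k+1}$ subintervals of $I(j)$, there is no per-job ``$2^{-k}$ fraction exposed'' inequality to exploit. In other words, lower-bounding the matching by the value of the specific fractional matching $x_{j,I_i}=y^{**}_{j,I_i}$ restricted to $E$ may be a much weaker bound than the max matching, and you have not shown that $\sum_j \textrm{lost}_j$ itself is $O(m2^{-k}T^*)$.

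The paper avoids this by not touching $\sum_j \textrm{lost}_j$ at all. It first observes that $y^{**}$ is a $V$-perfect fractional matching in the \emph{extended} edge set $E^+$ (one extra subinterval allowed on each side), so an integral $V$-perfect matching exists in $E^+$; then it invokes Hall's theorem on $E$: the number of discards is $|J^*| - |N(J^*)|$ for the Hall witness $J^*$, which is $\le |N^+(J^*)| - |N(J^*)|$, and this difference is at most $2m T^*/p$ per connected component of $N(J^*)$. The crux --- the piece missing from your argument --- is a structural observation that bounds the number of components: for every top job $j$, the set $N^+(j)$ must contain the \emph{midpoint} of $j$'s owner interval $I(j)$ (because $y^{**}_{j,\cdot}$ is not supported inside either half of $I(j)$, by definition of the owning level), and there are at most $\sum_{\ell=0}^{\ell^*-k-1} 2^\ell < 2p\cdot 2^{-k}$ such midpoints. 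This count of a fixed set of ``anchor'' subintervals is what converts the $2^{-k}$ level gap into the quantitative bound; without it, neither Hall's argument nor your fractional-loss argument yields a $2^{-k}$ factor.
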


\begin{proof}
We want to use a matching-based argument.
For this sake, we consider the bipartite graph with jobs on one side and subintervals on the other. Formally, we define a graph $G = (V, U, E^+)$ with $V = J_{\textrm{top}}$, $U = \{1,\ldots,p\}$ where the nodes $i \in U$ have capacity $\textrm{cap}(I_i)$, and edges 
\[
\hspace{-3mm} E^+ = \{(j,i) \in V \times U  \mid \max\{i_{r}(j) - 1, 1\} \leq i \leq \min\{i_{d}(j)+1,p\}\}.
\]
We say that a matching $M$ is \emph{$V$-perfect} if it covers every node in $V$.
Then the neighborhood of each top job includes every interval in which it is fractionally scheduled in $\bm{x}^{**}$. Moreover, each of the bottom jobs $j \in J_{\textrm{bottom}}'$ has been 
assigned by $\sigma$ to precisely that interval $I_i$ with $x_{j,I_i}^{**}=1$. Hence $\bm{x}^{**}$ gives a
$V$-perfect fractional matching that respects the given capacities $\textrm{cap}(I_i)$.
In bipartite graphs, the existence of a \emph{fractional} $V$-perfect matching implies
the existence of an \emph{integral} $V$-perfect matching, see e.g.~\cite{CombinatorialOptimizationBook-Schrijver-2003}. 

However, in order to assign the top jobs to slots within release times and deadlines
we are only allowed to use the smaller set of edges
\[
E = \{(j,i) \in V \times U \mid i_{r(j)} \leq i \leq i_{d_j}\}.
\]
For any $J^* \subseteq V$, we let $N^+(J^*)$ be the neighborhood of $J^*$ along edges in $E^+$ and let $N(J^*)$ be its neighborhood along edges in $E$. Let the magnitude of a neighborhood $|N(J^*)|$ be defined as the sum of capacities of the nodes it contains.
By \emph{Hall's Theorem}~\cite{CombinatorialOptimizationBook-Schrijver-2003}, the minimum number of \emph{exposed} $V$-nodes in a maximum matching in $E$ is precisely
\[
  \max_{J \subseteq V} \{ |J| - |N(J)| \}.
\]
Now, fix the set $J^* \subseteq V$ attaining the maximum; then $|J^*| - |N(J^*)|$
is the number of jobs that we have to discard. Since $E^+$ allows for a 
$V$-perfect matching, the reverse direction of Hall's Theorem gives that
$|J^*| \leq |N^+(J^*)|$. Thus $|J^*| - |N(J^*)| \leq |N^+(J^*)| - |N(J^*)|$.
Note that $N(J^*)$ is in general not a consecutive interval of $\{1,\ldots,p\}$.
We can upper bound the difference $|N^+(J^*)| - |N(J^*)|$ by $2m \cdot \frac{T^*}{p}$
times the number of connected components of $N(J^*)$. 
This is the point where we take advantage of the ``gap'' between the levels of the top and bottom jobs: for each job $j \in V$ there is an interval $I \in \pazocal{I}_0^* \dot{\cup} \ldots \dot{\cup} \pazocal{I}_{\ell^* - k - 1}^*$ so that $N^+(j)$
contains the midpoint of that interval. Due to the gap, there are at most $p \cdot 2\cdot 2^{-k}$ such intervals\footnote{It is possible that $N(j)=\emptyset$. Still $N^+(j)$ will contain a midpoint of a level $0,\ldots,\ell^*-k-1$ interval, hence we have accounted for those jobs as well. Note that such jobs would automatically get discarded.}. Hence the number of discarded jobs can be bounded by
\[
  |J^*| - |N(J^*)| \leq 2m \cdot \frac{T^*}{p} \cdot 2p \cdot 2^{-k} = 4m \cdot 2^{-k} \cdot T^*. 
\]
Finally note that a corresponding maximum matching can be computed in polynomial time.
\end{proof}

\subsection{Reassigning the Top Jobs via EDF}\label{sec:ReassigningTopJobsViaEDF}

We have seen so far that we can schedule most of the bottom and top jobs so that 
all precedence constraints within the bottom jobs are satisfied and the
top jobs are correctly scheduled between the bottom jobs that they depend on. However, 
the schedule as it is now ignores the precedence constraints within the top jobs. 
In this section, we will remove the top jobs from the schedule and then reinsert them
using a variant of the \emph{Earliest Deadline First} (EDF) scheduling policy. 

For the remainder of Section~\ref{sec:ReassigningTopJobsViaEDF}, we will show a 
stand-alone theorem that we will use as a black box. Imagine a general setting where
we have $m$ identical machines and $n$ jobs $J$, each one with integer release 
times $r_j$ and deadlines $d_j$ and a unit processing time. The EDF scheduling rule picks
at any time the available job with minimal $d_j$ for processing.
It is a classical result in scheduling theory by Dertouzos~\cite{EDF-Dertouzos74} that for $m=1$
and unit size jobs, EDF is an \emph{optimal} policy. Here ``optimal'' means that if there
is any schedule that finishes all jobs before their deadline, then EDF does so, too.
The result extends to the case of arbitrary running times $p_j$ if one allows \emph{preemption}.

Now, our setting is a little bit different. For each time $t$, we have a certain number $\textrm{cap}(t) \in \{ 0,\ldots,m\}$ of slots. Additionally, we have a precedence order that we need to respect. 
But we can use to our advantage that the precedence order has only short chains; moreover, the number of different release times and deadlines is limited. 

Formally we will prove the following:
\begin{theorem}\label{thm:EDF}
Let $J$ be a set of jobs with release times $r_j$, deadlines $d_j$ and consistent
precedence constraints\footnote{Here ``consistent'' means that 
for a pair of dependent jobs $j \prec j'$ one has $r_{j} \leq r_{j'}$ and $d_j \leq d_{j'}$.} with maximum chain length $C$.
Suppose that $\{1,\ldots,T\}$ is the time horizon, partitioned into
equally sized intervals $I_1,\ldots,I_p$
and all release times/deadlines correspond to beginnings and ends of
those intervals. Let
$\textrm{cap} : [T] \to \{ 0,\ldots,m\}$
be a capacity function and assume that there exists a 
schedule $\tilde{\sigma} : J \to [T]$ assigning each job to slots between its release time and deadline that respects
capacities but does not necessarily respect precedence constraints within $J$.

Then in polynomial time, one can find a schedule $\sigma : J \setminus J_{\textrm{discarded}} \to [T]$ that respects
capacities, release times, deadlines and precedence
constraints and discards  $|J_{\textrm{discarded}}| \leq p^2mC$ many jobs.
\end{theorem}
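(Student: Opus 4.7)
My plan is to apply a precedence-aware Earliest-Deadline-First algorithm (EDF-P) and then bound the total number of discards by $pmC$ via a Graham-style chain-blocking analysis charged across at most $p$ busy periods.

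\textbf{Algorithm.} Walk through time slots $t = 1, \ldots, T$ in order. Call a job $j$ \emph{eligible} at time $t$ if $r_j \le t \le d_j$ and every predecessor $j' \prec j$ has already been scheduled at some strictly earlier time. From the eligible jobs, pick up to $\textrm{cap}(t)$ with the smallest deadlines (breaking ties by a fixed topological order) and schedule them at $t$. Any job whose deadline passes without ever being scheduled is added to $J_{\textrm{discarded}}$. This is clearly polynomial time.

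\textbf{Discards equal active idle capacity.} Let $D = |J_{\textrm{discarded}}|$ and $W = \sum_{t} \textrm{cap}(t)$. The identities $|J| = (\textrm{scheduled}) + D$ and $(\textrm{scheduled}) = W - (\textrm{total idle})$, combined with the feasibility bound $|J| \le W$ coming from $\tilde\sigma$, yield
\[
  D \;=\; (\textrm{total idle}) \;-\; (W - |J|),
\]
which is exactly the ``active idle'' capacity --- slots left unused at times when some job is still pending (released, not yet scheduled and not yet discarded). Idle at times with no pending work does not contribute to $D$, since the corresponding slack is absorbed by the $(W-|J|)$ term.

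\textbf{Chain-blocking across busy periods.} An active-idle slot at time $t$ can only arise when every pending job is blocked by an unscheduled predecessor, since EDF-P would otherwise have filled the slot. I call a maximal contiguous range of times during which some job is pending a \emph{busy period}. Applying the classical Graham chain-tracing argument inside a single busy period shows that the total active-idle capacity accrued in that period is at most $mC$: starting from any blocked pending job at an active-idle moment, trace an unscheduled predecessor chain, each link of which is either currently being scheduled on another machine or itself blocked; such a chain has length at most $C$, and each unit of active-idle capacity can be charged to a distinct chain position currently in flight. Because every release time coincides with an interval endpoint, a new busy period can begin only at one of the $p$ interval boundaries, so there are at most $p$ busy periods. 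Summing gives $D \le p \cdot mC = pmC$.

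\textbf{Main obstacle.} The technical heart of the argument is the Graham-style chain-tracing bound inside each busy period, adapted to variable per-slot capacity $\textrm{cap}(t)\in\{0,\ldots,m\}$ and multiple deadlines. One must charge each unit of active-idle capacity to a distinct position on some precedence chain currently in progress on other machines, while using the consistency property $r_{j'}\le r_j$ for $j'\prec j$ to keep the entire blocking chain of predecessors within the same busy period; otherwise chains could artificially straddle period boundaries and inflate the count. A secondary subtlety is handling slots in a busy period where $\textrm{cap}(t)<m$, so that the ``$m$'' in $mC$ correctly reflects the worst-case per-slot capacity rather than the current one.
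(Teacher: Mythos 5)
Your high-level plan --- run a precedence-aware EDF and bound the discards by a chain-blocking argument --- is the same as the paper's, and your algorithm matches the paper's \textsc{Earliest Deadline First} procedure. However, the accounting step is where the real work is, and your version of it contains two genuine gaps.

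First, the identity you assert, that $D = (\textrm{total idle}) - (W-|J|)$ equals the \emph{active} idle capacity, is false. The arithmetic $D = (\textrm{total idle}) - (W-|J|)$ is correct, but this quantity can be strictly smaller than the active idle: consider a chain $a \prec b \prec c$, all with $r=1$, $d=2$, two machines, plus two independent jobs released in a later interval. EDF schedules $a$, then $b$, accruing two units of active idle while $c$ waits, and discards only $c$; so $D=1$ while active idle is $2$. The inequality $D \le (\textrm{active idle})$ is plausibly true, but you have not proved it, and equality --- which is what you state --- does not hold. Note that this inequality is not as immediate as it looks, because the existence of $\tilde\sigma$ only guarantees $W \ge |J|$ globally, not that the slack $W-|J|$ shows up as inactive idle in the EDF schedule.

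Second, the claim that each busy period accrues at most $mC$ active idle via ``the classical Graham chain-tracing argument'' is not established and is where the consistency hypothesis has to be used with real care. Starting from the last active-idle time in the busy period with some pending job $j$, one can indeed find a predecessor $j^{(1)} \prec j$ scheduled there (this is the paper's Lemma~\ref{lem:dep-discarded-job-to-nonbusy-time}). But to continue the chain to the \emph{previous} active-idle time $t'$ you need $j^{(1)}$ (and hence $j$) to already be released at $t'$. If new jobs are released at an interval boundary strictly inside the busy period, the pending job at $t'$ may be a completely different job with a disjoint predecessor chain, and then the number of active-idle times in the busy period is not bounded by a single chain of length $C$; you could end up with many short chains, one per interval boundary inside the period. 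Your remark that release times coincide with interval endpoints limits the number of busy-period \emph{starts} to $p$, but does not prevent a single busy period from spanning many intervals, each contributing its own chain. The paper's proof of Lemma~\ref{lem:Bound-on-num-nonbusy-times} is therefore deliberately stated per subinterval $I_i$, not per busy period, and the final count is then assembled by a different double-counting argument: one finds an interval $I_b$ carrying many discards, extends backward to a maximal contiguous range $I_a,\ldots,I_b$ in which every interval has at most $C$ non-busy times, and uses the minimality of $a$ (via Lemma~\ref{lem:dep-discarded-job-to-nonbusy-time}) to show that every job $j$ scheduled or discarded in this range satisfies $r_j \ge \min I_a$, so that $\tilde\sigma$ packs all those jobs into $\textrm{cap}(I_a\cup\cdots\cup I_b)$. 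Comparing this upper bound against the lower bound $\textrm{cap}(I') - pmC + K$ closes the argument. Your busy-period decomposition skips exactly this step --- the use of $\tilde\sigma$ to cap $|J'|$ over a carefully chosen contiguous range --- and as a result the bound does not follow. To repair your proof you would need to (i) actually prove $D \le$ active idle, and (ii) bound active idle per \emph{interval} (not per busy period) by $mC$ and then argue, as the paper does, that the intervals contributing both idle and discards form a contiguous range in which $\tilde\sigma$'s capacity is the binding constraint.
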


We use the following algorithm, which is a variant of Earliest Deadline First, where we
discard those jobs that we cannot process in time:
\begin{center}
\psframebox{
\begin{minipage}{14.0cm}
{\sc Earliest Deadline First} \vspace{1mm} \hrule  \vspace{1mm}
{\bf Input:} Jobs $J$ with deadlines, release times, precedence constraints; capacity function $\textrm{cap} : [T] \to \{ 0,\ldots,m\}$ \\
{\bf Output:} Schedule $\sigma : J \to [T] \cup \{ \texttt{DISCARDED}\}$ \vspace{1mm} \hrule \vspace{2mm} 
\begin{enumerate*}
\item[(1)] Set $\sigma(j) := \texttt{UNASSIGNED}$ for all $j \in J$ and
$J_{\textrm{discarded}} := \emptyset$
\item[(2)] Sort the jobs $J = \{ 1,\ldots,n\}$ so that $d_1 \leq d_2
\leq \ldots \leq d_n$
\item[(3)] FOR $t=1$ TO $T$ DO \vspace{-1.0mm}
   \begin{enumerate*}
   \item[(4)] FOR $\textrm{cap}(t)$ MANY TIMES DO
   \begin{enumerate*}
     \item[(5)] Select the lowest index $j$ of a job with $r_j \leq t
\leq d_j$ that has not been scheduled or discarded
and that has all jobs in $\delta^-(j)$ already processed (or discarded).
     \item[(6)] Set $\sigma(j) := t$ (if there was any such job)
   \end{enumerate*}
   \item[(7)] FOR each  $j \in J$ with $d_j=t$ and $\sigma(j) =
\texttt{UNASSIGNED}$, add $j$ to $J_{\textrm{discarded}}$
and set $\sigma(j) := \texttt{DISCARDED}$
   \end{enumerate*}
\end{enumerate*}
\end{minipage}
}
\end{center}
At the end all jobs $j$ will be either scheduled between $r_j$ and $d_j$
(that means $r_j \leq \sigma(j) \leq d_j$) or they are
in $J_{\textrm{discarded}}$.

We will say that a job $j$ was \emph{discarded in the interval $[t,t']$}
if $j \in J_{\textrm{discarded}}$ and
$t \leq d_j \leq t'$. We call a time $t$ \emph{busy} if
$|\sigma^{-1}(t)| = \textrm{cap}(t)$ and \emph{non-busy} otherwise.
Let us make a useful observation:
\begin{lemma} \label{lem:dep-discarded-job-to-nonbusy-time}
Let $I=\{t',\ldots,t''\} \subseteq I_i$ be part of one of the subintervals. 
Suppose that there is a non-busy time $t^* \in I$
and a job $j$ with $I \subseteq \{r_j,\ldots,d_j\}$ and $\sigma(j) \in \{
\texttt{DISCARDED} \} \cup \{ t''+1,\ldots,T\}$. Then there is a job
$j^* \in \sigma^{-1}(t^*)$
with $j^* \prec j$.
\end{lemma}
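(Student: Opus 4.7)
The plan is to isolate the first iteration of the step~(4) loop at time $t^*$ in which step~(5) fails to find an eligible job. This ``failure moment'' exists because $t^*$ is non-busy: the loop runs $\textrm{cap}(t^*)$ times but fewer than $\textrm{cap}(t^*)$ jobs actually end up assigned to $t^*$, and within a single time step eligibility can only worsen (nothing becomes newly eligible mid-loop), so the first failing iteration is well-defined.

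First I would verify that at the failure moment the job $j$ meets the self-conditions of step~(5): $r_j \leq t^* \leq d_j$ follows immediately from $t^* \in I \subseteq \{r_j,\ldots,d_j\}$, and $\sigma(j) = \texttt{UNASSIGNED}$ follows from the hypothesis on $\sigma(j)$. Indeed, values of $\sigma$ are assigned once and never revised, so $\sigma(j) > t''$ or $\sigma(j) = \texttt{DISCARDED}$ rules out $\sigma(j) \in \{1,\ldots,t^*\}$ at this moment (using $t^* \leq t''$), and $j$ cannot yet have been discarded because step~(7) only discards at time exactly $d_j \geq t'' \geq t^*$, which is no earlier than the failure moment and is executed after step~(4). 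The only remaining reason for $j$'s ineligibility is the predecessor clause: there exists $j_1 \prec j$ with $\sigma(j_1) \notin \{1,\ldots,t^*-1\} \cup \{\texttt{DISCARDED}\}$, and since $\sigma(j_1)$ cannot be a time larger than $t^*$ at this moment, this forces $\sigma(j_1) \in \{\texttt{UNASSIGNED}, t^*\}$.

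If $\sigma(j_1) = t^*$, I set $j^* := j_1$ and I am done. Otherwise $j_1$ itself is \texttt{UNASSIGNED}, and by consistency of release times and deadlines one has $r_{j_1} \leq r_j \leq t^* \leq d_{j_1}$, the last inequality because $j_1$ has not been discarded yet. The very same reasoning then applies to $j_1$: it must have a predecessor $j_2 \prec j_1$ with $\sigma(j_2) \in \{\texttt{UNASSIGNED}, t^*\}$, and by transitivity $j_2 \prec j$, so if $\sigma(j_2) = t^*$ I am done. Iterating, I either stop at some $j_k$ with $\sigma(j_k) = t^*$ (giving $j^* := j_k$), or I produce an unbounded $\prec$-descending chain of \texttt{UNASSIGNED} jobs. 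The latter is ruled out by the chain-length bound $C$, so the iteration terminates with a predecessor of $j$ that lies in $\sigma^{-1}(t^*)$.

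The main subtlety, and what I expect to be the only care-requiring point of the writeup, is precisely the need to iterate: a single application of step~(5)'s failure only tells me that the blocking predecessor lies in the two-element set $\{\texttt{UNASSIGNED}, t^*\}$, not directly in $\sigma^{-1}(t^*)$. Upgrading the weaker option into the desired conclusion $\sigma(j^*) = t^*$ requires a recursive descent along $\prec$, and it is the interplay between the maximum-chain-length bound $C$ and the consistency inequalities $r_{j'} \leq r_j$, $d_{j'} \leq d_j$ for $j' \prec j$ that guarantees this descent is both well-defined at each step (so that each $j_i$ itself satisfies the prerequisites of step~(5)) and finite.
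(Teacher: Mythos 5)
Your proof is correct, and it takes a genuinely different route from the paper's. The paper's argument is static: it fixes an inclusion-wise maximal chain $j_1 \prec \ldots \prec j_q = j$ whose other members $j_1,\ldots,j_{q-1}$ lie in $\sigma^{-1}(\{t^*,\ldots,t''\})$, and then argues via the EDF rule and the non-busy slot at $t^*$ that $\sigma(j_1)=t^*$ (or, when $q=1$, derives a contradiction). Your argument instead snapshots the algorithm's state at the first iteration of the inner loop at time $t^*$ that fails to find an eligible job, and then descends recursively along blocking predecessors, using the consistency inequalities $r_{j'}\leq r_j$, $d_{j'}\leq d_j$ for $j'\prec j$ and the chain-length bound $C$ to force the descent to terminate at a job in $\sigma^{-1}(t^*)$. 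The two arguments trace essentially the same $\prec$-chain, but your dynamic phrasing buys extra explicitness: the paper's $q=1$ case asserts that $j$ would have been eligible at $t^*$ because no predecessor is \emph{scheduled} in $\{t^*,\ldots,t''\}$, without spelling out what happens with predecessors of $j$ that are still unassigned at time $t^*$ and are eventually discarded or scheduled after $t''$; your recursion handles exactly those. One point worth making explicit in your write-up is that ``already processed'' in step~(5) must be read as processed at a \emph{strictly} earlier time (otherwise two $\prec$-comparable jobs could land in the same slot, violating precedence); this strict reading is what makes eligibility monotonically non-increasing within a single time step and hence makes your ``first failing iteration'' well-defined, and it is also what lets you treat $\sigma(j_1)=t^*$ as a stopping case of the descent rather than as a non-blocker.
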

\begin{proof}
Consider any inclusion-wise maximal chain of jobs $j_1 \prec j_2 \prec
\ldots \prec j_q$ that
ends in $j = j_q$ and otherwise has only jobs $j_1,\ldots,j_{q-1} \in
\sigma^{-1}(\{t^*,\ldots,t''\})$.
First suppose that $q>1$ and hence $j_1 \neq j$.
It is impossible that $\sigma(j_1) > t^*$ because by assumption $r_{j_1}
\leq t^*$ and hence
EDF would have processed $j_1$ already earlier at time $t^*$ (by
maximality of the chain, there is
no job scheduled at times $t^*,\ldots,\sigma(j_1)-1$ on which $j_1$
might depend).
Hence  $\sigma(j_1) = t^*$ and by transitivity $j_1 \prec j$, which
gives the claim.

In the 2nd case, we have $j_1 = j$, hence there is no job that $j$
depends on scheduled between $t^*$ and $t''$.
But we know that $r_{j} \leq t^*$. 
Thus EDF would have processed $j$ at time $t^*$ or earlier.
\end{proof}

With this observation we can easily limit the number of non-busy times
per interval:

\begin{lemma} \label{lem:Bound-on-num-nonbusy-times}
Let $I = \{ t',\ldots,t''\} \subseteq I_i$ be part of one of the subintervals. 
Suppose that there is at least one job $j$ with $I \subseteq \{r_j,\ldots,d_j\}$ and $\sigma(j) \in \{
\texttt{DISCARDED}\} \cup \{ t''+1,\ldots,T\}$.
Then the number of non-busy times in $I$ is bounded by $C$.
\end{lemma}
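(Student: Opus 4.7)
The plan is to argue by contradiction: assume that $I$ contains at least $C+1$ non-busy times $t_1^\ast < t_2^\ast < \cdots < t_{C+1}^\ast$ and then build a chain $j_1^\ast \prec j_2^\ast \prec \cdots \prec j_{C+1}^\ast \prec j$ of $C+2$ distinct jobs, contradicting the maximum chain length assumption of $C$. Since Lemma~\ref{lem:dep-discarded-job-to-nonbusy-time} is precisely designed to extract a predecessor of a ``late'' job at a specified non-busy time, the whole argument is a reverse induction that repeatedly invokes it on shrinking sub-intervals of $I$.

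For the base of the induction, apply Lemma~\ref{lem:dep-discarded-job-to-nonbusy-time} directly to $I$, the job $j$, and the non-busy time $t_{C+1}^\ast$; this yields $j_{C+1}^\ast \in \sigma^{-1}(t_{C+1}^\ast)$ with $j_{C+1}^\ast \prec j$. For the inductive step, suppose we have already produced $j_{i+1}^\ast$ with $\sigma(j_{i+1}^\ast) = t_{i+1}^\ast$ and $j_{i+1}^\ast \prec \cdots \prec j$. Set $I^{(i)} := \{t',\ldots,t_{i+1}^\ast - 1\} \subseteq I_i$ and invoke the same lemma with $I^{(i)}$, the job $j_{i+1}^\ast$, and the non-busy time $t_i^\ast \in I^{(i)}$, obtaining $j_i^\ast \in \sigma^{-1}(t_i^\ast)$ with $j_i^\ast \prec j_{i+1}^\ast$; transitivity then extends the chain all the way up to $j$.

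The main thing to verify at each step is that the hypotheses of the invoked lemma are genuinely satisfied for the shrunk interval and the new ``late'' job. The inclusion $I^{(i)} \subseteq \{r_{j_{i+1}^\ast},\ldots,d_{j_{i+1}^\ast}\}$ follows from the consistency of release times and deadlines along the chain: since $j_{i+1}^\ast \prec j$ we have $r_{j_{i+1}^\ast} \leq r_j \leq t'$, and since $\sigma(j_{i+1}^\ast) = t_{i+1}^\ast$ we have $d_{j_{i+1}^\ast} \geq t_{i+1}^\ast > t_{i+1}^\ast - 1$. The ``late'' condition also holds, because $\sigma(j_{i+1}^\ast) = t_{i+1}^\ast \in \{(t_{i+1}^\ast - 1) + 1,\ldots,T\}$. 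The only real obstacle is this bookkeeping: ensuring the shrunk interval is still contained in a partition subinterval, that consistency carries through the whole chain, and that the jobs are pairwise distinct (which is automatic because they are scheduled at distinct times, and $j$ itself is either discarded or scheduled past $t''$, so $j \notin \{j_1^\ast,\ldots,j_{C+1}^\ast\}$). Once these are checked, the resulting chain of $C+2$ distinct jobs contradicts the maximum chain length bound, forcing the number of non-busy times in $I$ to be at most $C$.
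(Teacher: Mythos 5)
Your proof is correct and takes essentially the same approach as the paper: the paper directly builds the chain by iterating Lemma~\ref{lem:dep-discarded-job-to-nonbusy-time} from the latest non-busy time backwards, replacing $t''$ with $t^*-1$ and $j$ with $j^*$ at each step, while you cast the identical inductive construction as a proof by contradiction with explicit indices $t_1^* < \cdots < t_{C+1}^*$. The hypothesis-checking you spell out (that $r_{j_{i+1}^*} \le r_j \le t'$ via consistency, and $d_{j_{i+1}^*} \ge \sigma(j_{i+1}^*) = t_{i+1}^* > t_{i+1}^* - 1$ since EDF respects deadlines) is exactly what makes the paper's ``continue by induction'' step legitimate.
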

\begin{proof}
By Lemma~\ref{lem:dep-discarded-job-to-nonbusy-time}, for the latest time $t^*
\in I$
with $|\sigma^{-1}(t^*)| < \textrm{cap}(t^*)$, there is at least one job $j^*
\in \sigma^{-1}(t^*)$ with $j^* \prec j$. 
Then we can continue by induction, replacing $t''$ with $t^*-1$ and replacing $j$ by $j^*$
to build a chain of jobs ending with $j$ that includes a job scheduled at each non-busy time.
Since no chains can be longer than $C$, this gives the claim.
\end{proof}

Now we come to the main argument where we give an upper bound of the
number of discarded jobs:
\begin{lemma} 
One has $|J_{\textrm{discarded}}| \leq p^2mC$.
\end{lemma}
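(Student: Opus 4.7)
My plan is to combine a capacity-accounting argument with a per-interval application of Lemma~\ref{lem:Bound-on-num-nonbusy-times}. Since $\tilde{\sigma}$ is a feasible (non-precedence-respecting) schedule, we have $|J|=\sum_t|\tilde{\sigma}^{-1}(t)|\le \sum_{t=1}^T \textrm{cap}(t)$, while the EDF schedule $\sigma$ places exactly $|J|-|J_{\textrm{discarded}}|$ jobs, so the total unused capacity $U:=\sum_{t=1}^T\bigl(\textrm{cap}(t)-|\sigma^{-1}(t)|\bigr)$ satisfies $|J_{\textrm{discarded}}|\le U$. Since every non-busy time $t$ contributes at most $m$ to $U$, we can split $U=\sum_{i=1}^p U_i$ by sub-interval and obtain $U_i\le m\,N_i$, where $N_i$ denotes the number of non-busy times in $I_i$. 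This gives the preliminary bound $|J_{\textrm{discarded}}|\le m\sum_{i=1}^p N_i$.

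Next, I would argue $N_i\le C$ for every $i$, which immediately yields $|J_{\textrm{discarded}}|\le m\cdot pC = pmC$. The witness required by Lemma~\ref{lem:Bound-on-num-nonbusy-times} is automatic in the presence of any discarded job with deadline $t_i$: because release times lie at interval beginnings, a job $j$ with $d_j=t_i$ necessarily has $r_j\le t_{i-1}+1$, and hence $I_i\subseteq\{r_j,\ldots,d_j\}$; such a $j$ then serves as the witness and Lemma~\ref{lem:Bound-on-num-nonbusy-times} gives $N_i\le C$. An analogous argument applies to intervals $I_i$ crossed by a job that $\sigma$ schedules after $t_i$: the late-scheduled job provides the witness.

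The main obstacle I expect is treating intervals $I_i$ for which no such witness exists, so that Lemma~\ref{lem:Bound-on-num-nonbusy-times} does not directly bound $N_i$. Here every job whose availability covers $I_i$ is already scheduled by time $t_i$, so EDF's queue was effectively empty throughout the non-busy portion of $I_i$. To handle this cleanly I would sharpen the first paragraph's inequality to the identity $|J_{\textrm{discarded}}|=U-\bigl(\sum_t \textrm{cap}(t)-|J|\bigr)$ and show that the ``excess'' unused capacity (beyond $mC$) in witness-free intervals is absorbed by the global slack $\sum_t\textrm{cap}(t)-|J|\ge 0$: this unused capacity corresponds to genuine surplus of machine-time over job demand rather than to any discarded job. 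Putting everything together recovers $|J_{\textrm{discarded}}|\le pmC$.
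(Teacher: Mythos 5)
Your first two paragraphs are sound: the identity $|J_{\textrm{discarded}}| = U - \bigl(\sum_t \textrm{cap}(t) - |J|\bigr)$ with $U := \sum_t (\textrm{cap}(t)-|\sigma^{-1}(t)|)$ is correct, $U_i \le m N_i$ per interval is correct, and $N_i \le C$ whenever $I_i$ has a ``witness'' job in the sense of Lemma~\ref{lem:Bound-on-num-nonbusy-times} (a discarded job with $d_j = t_i$, or a job with window containing $I_i$ scheduled after $t_i$) is a correct application of that lemma, together with the observation that release/deadline alignment to interval endpoints forces $I_i \subseteq \{r_j,\dots,d_j\}$ for any such job.

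The third paragraph is where the proof breaks down, and it is not a minor detail you could fill in routinely. The claim you need is $\sum_{i\in W} U_i \le \sum_t \textrm{cap}(t) - |J|$, where $W$ is the set of witness-free intervals. Unrolling the algebra, this is equivalent to asserting that $\sigma$ fills at least $|J| - \textrm{cap}([T]\setminus W)$ slots inside $W$-intervals. The intuition ``EDF's queue was effectively empty, so the idle capacity is genuine surplus'' does not establish this: in a witness-free interval $I_i$, every job with window $\supseteq I_i$ is indeed scheduled by time $t_i$, but many of them may have been placed by $\sigma$ in \emph{earlier} non-$W$ intervals, while $\tilde\sigma$ placed them inside $I_i$. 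Such jobs occupy non-$W$ capacity under $\sigma$ that they did not occupy under $\tilde\sigma$, which can push other jobs out and create discards --- so the idle capacity in $W$ is not in one-to-one correspondence with the global slack $\sum_t\textrm{cap}(t)-|J|$. You would need a genuine charging argument here, and I do not see one that works globally; you flagged this yourself as the main obstacle, and indeed it is.

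The paper's proof avoids this pitfall entirely by being local rather than global. It picks, by pigeonhole, an interval $I_b$ receiving at least $|J_{\textrm{discarded}}|/p$ discards (so $I_b$ is automatically witnessed), deletes jobs of lower priority than the lowest-priority job discarded in $I_b$, and then extends $I_b$ backward to a maximal window $I' = I_a\cup\cdots\cup I_b$ in which \emph{every} interval has at most $C$ non-busy times. Lemma~\ref{lem:dep-discarded-job-to-nonbusy-time} then forces every job scheduled in $I'$ or discarded with deadline in $I'$ to have its release time inside $I'$; together with the deletion of lower-priority jobs, these jobs also have deadlines inside $I'$, so $\tilde\sigma$ must pack them all into $I'$. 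Double-counting within $I'$ alone (capacity minus $\le mC$ idle per interval, plus $\ge |J_{\textrm{discarded}}|/p$ discards, versus total capacity of $I'$) closes the argument without ever needing to control witness-free intervals. If you want to rescue your global accounting, you would essentially have to re-derive the conclusion that each interval is either witnessed or contributes no discards to the ledger, which is precisely what the paper's localization buys for free.
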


\begin{proof}
Suppose that $|J_{\textrm{discarded}}| \geq p\cdot K$; we will
then derive
a bound on $K$.
By the pigeonhole principle, we can find an interval $I_b$
so that
at least $K$ many jobs get discarded in $I_b$.
Let us denote the lowest priority (i.e. the latest deadline) job
that gets discarded in $I_b$ by $j_s$. Now delete all those lower priority
jobs $j_{s+1},\ldots,j_n$. Note that this does not affect how EDF schedules
$j_1,\ldots,j_s$ and still we would have at least $K$ jobs discarded in
$I_b$, including $j_s$.
By Lemma~\ref{lem:Bound-on-num-nonbusy-times},
the number of non-busy periods in $I_b$
is bounded by $C$. Now, choose a minimal index $a \in \{
1,\ldots,b-1\}$ so that in
each of the
intervals $I_{a},\ldots,I_b$ one has at most $C$ many non-busy
periods.
We abbreviate $I' := I_{a} \cup \ldots \cup I_b$. Note that
by definition $I_{a-1}$ has more than $C$ many non-busy
periods\footnote{Admittedly it is possible that $a=1$ in which
case one might imagine $I_0$ as an interval in which all times are
non-busy and which does not contain any release times.}.
Define
\[
J' := \Big\{ j \in \{ j_1,\ldots,j_s\} : (\sigma(j) \in I') \textrm{ or
} \\ (j\textrm{ discarded and }d_j \in I')  \Big\}.
\]
By Lemma~\ref{lem:dep-discarded-job-to-nonbusy-time}, any job in $J'$ has
its release time in $I_a$ or later, since otherwise we could not have
$C+1$
non-busy times in $I_{a-1}$. Now, let us double count the number of
jobs in $J'$.
On the one hand, we have
\begin{eqnarray*}
   |J'| \geq \sum_{i=a}^b |\sigma^{-1}(I_i)| + |J' \cap
J_{\textrm{discarded}}| 
&\geq& \sum_{i=a}^b (\textrm{cap}(I_i) - mC) + \underbrace{|J' \cap
J_{\textrm{discarded}}|}_{\geq K} \\
&\geq& \textrm{cap}(I') - pmC + K.
\end{eqnarray*}
On the other hand, we know that there is an assignment $\tilde{\sigma}$ of jobs
in $J'$ to slots in $I_{a},\ldots,I_{b}$. That tells us that $|J'| \leq \textrm{cap}(I')$.
Comparing both bounds gives that $K \leq pmC$.
\end{proof}




\section{Step (2) --- Accounting for Discarded Jobs}\label{sec:AccountingDiscardedJobs}

In this section, we need to argue that the level $\ell^*$ can be chosen so that the total 
number of jobs that are discarded in Steps (1)-(4) are bounded by
\[
 |J_{\textrm{discarded}}^*| \leq \frac{\varepsilon}{2} \cdot \frac{\log (|I^*|)}{\log(T)} \cdot |I^*| + \frac{\varepsilon}{2m} \cdot |J^*|
\]
as claimed. 
Let us summarize the three occasions in the algorithm where a job might get discarded: 
\begin{enumerate}
\item[(A)] In Step (3), in order to schedule the bottom jobs, 
we have $2^{\ell^*}$ many recursive calls of Lemma~\ref{lem:MainRecursiveAlgorithm}
for intervals $I \in \pazocal{I}_{\ell^*}^*$. The cumulative number of discarded jobs from all those calls
is bounded by
\[
 2^{\ell^*} \cdot \frac{\varepsilon}{2} \cdot \frac{\log(\frac{T^*}{2^{\ell^*}})}{\log(T)} \cdot \frac{T^*}{2^{\ell^*}} + \frac{\varepsilon}{2m} \cdot |J_{\textrm{bottom}}| 
= \frac{\varepsilon}{2} \cdot \frac{\log(T^*)-\ell^*}{\log(T)} \cdot T^* + \frac{\varepsilon}{2m} \cdot |J_{\textrm{bottom}}|
\]
\item[(B)] As we have seen in Section~\ref{sec:SchedulingTopJobs}, the number of top jobs that 
need to be discarded in Step (4) can be bounded by 
\[
4m \cdot 2^{-k} \cdot T^* + p^2mC 
 \leq 4m \cdot 2^{-k} \cdot T^* + k^2 m2^{2k^2} T^* \delta 
\leq \frac{\varepsilon}{4} \cdot \frac{1}{\log(T)} \cdot T^*.
\]
Here we use that the length of the maximum chain within the top jobs is $C \leq k^2 \delta T^*$.
Moreover, we have substituted the parameters $p = 2^{\ell^*} \leq 2^{k^2}$
as well as  $\delta = \frac{\varepsilon}{8k^2 m2^{2k^2} \log(T)}$ and $k = c_1 \frac{m}{\varepsilon} \log \log(T)$ with a large enough constant $c_1>0$.
\item[(C)] In Step (2), we discard all the middle jobs. In the remainder of this section we prove 
that there is an index $\ell^*$ so that 
\[
   |J_{\textrm{middle}}| \leq \frac{\varepsilon}{4} \cdot \frac{1}{\log(T)} \cdot T^* + \frac{\varepsilon}{2m} \cdot (|J_{\textrm{middle}}| + |J_{\textrm{top}}|)
\]
\end{enumerate}

Let us first assume that we can indeed find a proper index $\ell^*$ so that the bound in (C)
is justified. Then the total number of jobs that the algorithm discards
is
\begin{eqnarray*}
& & \overbrace{\frac{\varepsilon}{2} \cdot \frac{\log(T^*)-\ell^*}{\log(T)} \cdot T^* + \frac{\varepsilon}{2m} \cdot |J_{\textrm{bottom}}|}^{(A)} 
 + \overbrace{\frac{\varepsilon}{4} \cdot \frac{1}{\log(T)} \cdot T^*}^{(B)} \\
& & +\overbrace{\frac{\varepsilon}{4} \cdot \frac{1}{\log(T)} \cdot T^* + \frac{\varepsilon}{2m} \cdot (|J_{\textrm{middle}}| + |J_{\textrm{top}}|)}^{(C)} \\
&\leq& \frac{\varepsilon}{2} \cdot \frac{\log(T^*)}{\log(T)}\cdot T^* + \frac{\varepsilon}{2m} \cdot \underbrace{(|J_{\textrm{top}}| + |J_{\textrm{middle}}| + |J_{\textrm{bottom}}|)}_{=|J^*|}
\end{eqnarray*}
which is the bound that we claimed in Lemma~\ref{lem:MainRecursiveAlgorithm}. 

It remains to justify the claim in (C). 
We abbreviate   $\alpha_{i} := |J(i \cdot k,\bm{x}^{**}) \cup \ldots \cup J((i+1) \cdot k-1,\bm{x}^{**})|$ for  $i \in \{ 0,\ldots,k-1\}$. 
In words, each number $\alpha_i$ represents the number of jobs owned by $k$ consecutive levels. 
We observe that if there is an index $i \in \{ 0,\ldots,k-1\}$ so that 
\[
(I) \quad \alpha_i \leq \frac{\varepsilon}{4\log(T)} \cdot T^* \quad\quad \textrm{or} \quad\quad (II) \quad \alpha_i \leq \frac{\varepsilon}{2m} \cdot \sum_{j=1}^i \alpha_j
\]
then we can choose $\ell^* := (i+1) \cdot k$ and (C) will be satisfied. Here we use that for this particular choice of $\ell^*$, 
we will have $|J_{\textrm{middle}}| = \alpha_i$ and $|J_{\textrm{top}}| = \alpha_0 + \ldots + \alpha_{i-1}$.

So, we assume for the sake of contradiction that no index $i$ satisfies either $(I)$ or $(II)$ (or both). 
Then one can easily show that the $\alpha_i$'s have to grow exponentially. We show this in a small lemma:
\begin{lemma} \label{lem:GrowOfSequenceAlphaIs}
Let  $q \in \setN$ and suppose we have a sequence of numbers $\alpha_0,\alpha_1,\ldots,\alpha_{N}$ 
satisfying $\alpha_i \geq \alpha_{\min}>0$ and $\alpha_i \geq \frac{1}{q} \cdot \sum_{j=1}^i \alpha_j$ for all $i=0,\ldots,N$. 
Then $\alpha_i \geq 2^{\lfloor i/(2q) \rfloor} \cdot \alpha_{\min}$. 
\end{lemma}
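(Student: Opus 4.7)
The plan is to prove the slightly reformulated statement that $\alpha_i \geq 2^b \alpha_{\min}$ for every $i \geq 2qb$, by induction on $b \geq 0$; the lemma then follows by taking $b = \lfloor i/(2q) \rfloor$.

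The first move is to separate out the $j = i$ term in the hypothesis. For $i \geq 1$, the inequality $\alpha_i \geq \frac{1}{q}\sum_{j=1}^i \alpha_j$ is equivalent to
\[
   \alpha_i \;\geq\; \frac{1}{q-1}\sum_{j=1}^{i-1}\alpha_j,
\]
where we may assume $q \geq 2$ (the case $q=1$ would force $\alpha_1 + \ldots + \alpha_{i-1} \leq 0$, contradicting $\alpha_j \geq \alpha_{\min} > 0$). This rewriting is the main handle for the induction.

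The base case $b = 0$ is immediate from $\alpha_i \geq \alpha_{\min}$. For the inductive step, assume $\alpha_j \geq 2^b \alpha_{\min}$ for all $j \geq 2qb$, and take any $i \geq 2q(b+1)$. The index set $\{\max(1,2qb), \max(1,2qb)+1, \ldots, i-1\}$ is contained in the range of summation of $\sum_{j=1}^{i-1}\alpha_j$, it contains at least $2q - 1$ integers, and every one of them is $\geq 2qb$, so the inductive hypothesis applies termwise. Hence
\[
   \sum_{j=1}^{i-1}\alpha_j \;\geq\; (2q-1)\cdot 2^b \alpha_{\min},
\]
and plugging into the reformulated key inequality gives
\[
   \alpha_i \;\geq\; \frac{2q-1}{q-1}\cdot 2^b \alpha_{\min} \;\geq\; 2\cdot 2^b\alpha_{\min} \;=\; 2^{b+1}\alpha_{\min},
\]
using $(2q-1)/(q-1) = 2 + 1/(q-1) \geq 2$.

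There is no real obstacle: the only decision to get right is the choice of inductive statement. A naive step-by-step recursion, showing $\alpha_{i+1} \geq \alpha_i / (q-1)$, gives a ratio $<1$ and does not directly yield doubling; the cleaner viewpoint is that once $2q$ indices below $i$ are all known to exceed $2^b\alpha_{\min}$, their sum forces $\alpha_i$ itself to exceed $2^{b+1}\alpha_{\min}$. The mildly subtle point is the $b=0$ case, where the summation in $S_{i-1}$ starts at $j=1$ and so one loses a single index, leaving $2q-1$ rather than $2q$ terms — but the slack in $(2q-1)/(q-1) \geq 2$ absorbs this exactly.
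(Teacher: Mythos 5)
Your proof is correct and follows essentially the same block-of-$2q$ induction the paper uses. The one refinement you make is worth noting: the paper phrases the inductive step as ``$\alpha_i$ is at least the sum of $2q$ numbers all $\geq 2^{j-1}\alpha_{\min}$,'' which, read literally, ignores both the $1/q$ factor in the hypothesis and the fact that for the first block the summation $\sum_{j=1}^{i}\alpha_j$ misses $\alpha_0$, leaving only $2q-1$ terms from block $0$. You fix this by first isolating the $j=i$ term to get the sharper recurrence $\alpha_i \geq \frac{1}{q-1}\sum_{j=1}^{i-1}\alpha_j$, so that $2q-1$ terms suffice because $\frac{2q-1}{q-1}\geq 2$. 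That is a cleaner way to absorb the boundary loss; otherwise the two arguments are the same.
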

\begin{proof}
Group the indices into consecutive \emph{blocks} of $2q$ numbers, where $\alpha_0,\ldots,\alpha_{2q-1}$ is 
block $0$, $\alpha_{2q},\ldots,\alpha_{4q-1}$ is block 1 and so on. We want to prove by induction that
each $\alpha_i$ in the $j$th block is at least $2^{j} \cdot \alpha_{\min}$. For $j=0$, the claim follows from the 
assumption. For $j>0$, we use that $\alpha_i$ is at least the sum of $2q$ numbers 
that by inductive hypothesis are all at least $\frac{2^{j-1}}{q} \cdot \alpha_{\min}$. The claim follows. 
\end{proof}

Applying Lemma~\ref{lem:GrowOfSequenceAlphaIs} with $\alpha_{\min} := \frac{\varepsilon}{4\log(T)} \cdot T^*$ and
$q := \frac{2m}{\varepsilon}$ we obtain in particular that 
\[
  \alpha_{k-1} \geq 2^{\lfloor (k-1) \frac{\varepsilon}{4m} \rfloor} \cdot \frac{\varepsilon}{4\log(T)} \cdot T^*
\]
If we set  
$k =\frac{c_1m}{\varepsilon} \log(\log(T))$ for some adequately large $c_1$, then
 $\alpha_{k-1} > mT^*$, which is a contradiction
since we only have $|J^*| \leq m \cdot |I^*| = mT^*$ many jobs with $x_{j,I^*} = 1$. 

\section{Conclusion} 

For the proof of Lemma~\ref{lem:MainRecursiveAlgorithm} we already argued that the
number of discarded jobs satisfies the claimed bound and that all precedence constraints
will be satisfied. Regarding the number of rounds in the hierarchy, recall that
we started with a solution $\bm{x}^* \in \texttt{SA}(K(T),r^*)$ with $r^* \geq \log(T^*) \cdot 2mk^2 \cdot 2^{k^2} / \delta$. Then we apply a round of conditionings in Lemma~\ref{lem:BreakingChains} 
to obtain $\bm{x}^{**} \in \texttt{SA}(K(T),r^{**})$ with $r^{**} := r^* -2mk^2 \cdot 2^{k^2} / \delta$.
We use copies of the solution $\bm{x}^{**} $ in our recursive application of Lemma~\ref{lem:MainRecursiveAlgorithm} to intervals of size $T^* / 2^{\ell^*}$. Since $\ell^* \geq k \geq 1$, the remaining 
number of 
LP-hierarchy rounds satisfies $r^{**} \geq \log(T^* / 2^{\ell^*}) \cdot 2mk^2 \cdot 2^{k^2} / \delta$. Thus we still have enough 
LP-hierarchy rounds for the recursion.  

Another remark concerns why we may assume that the precedence constraints are consistent in 
Theorem~\ref{thm:EDF}. 
Suppose we have jobs $j,j' \in J_{\textrm{top}}$ with $j \prec j'$ and consider the definition of release times and
deadlines in Eq.~\ref{eq:ReleaseTimeDeadlineDef}. By transitivity, any job $j'' \in J_{\textrm{bottom}}$ with $j' \prec j''$ which limits the deadline
of $j'$ will also limit the deadline of $j$, hence $d_j \leq d_{j'}$. Similarly one can argue
that $r_{j} \leq r_{j'}$.
This concludes the proof of Lemma~\ref{lem:MainRecursiveAlgorithm}
and our main result follows. 

\section{Follow-up work and open problems}

A natural question that arises is whether the number of $O(\log n)^{O(\log \log n)}$
rounds for constant $\varepsilon,m$ can be improved. In fact, after the conference version of this
paper appeared, Garg~\cite{QPTAS-Scheduling-GargArxiv2017} was able to reduce the number of rounds down 
to $O(\log n)^{O(1)}$, hence providing an actual QPTAS. It remains open whether $c(m,\varepsilon)$ many rounds
suffice as well. Another tantalizing question is whether a similar approach could give a $(1+\varepsilon)$-approximation for $Pm \mid \textrm{prec} \mid C_{\max}$, where the processing times $p_j \in \setN$ are arbitrary. Note that the
difficulty in this setting comes from the issue that jobs have to be scheduled non-preemptively.

\bibliographystyle{alpha}
\bibliography{QPTASPrecScheduling}

\end{document}